\newcommand{\lastIDX}{\ensuremath{k\cdot n + m}}
\newcommand{\newInputLetter}{\ensuremath{{\rho}}}
\newcommand{\newInput}{\ensuremath{\bar{\newInputLetter}}}
\newcommand{\UBnItemsPerBlock}{\max\set{\floor{\mu},1}}
\newcommand{\UBnBlocks}{\floor{n/\nu}}
\newcommand{\bSize}{\ceil{\mu}}
\newcommand{\bContentSize}{\max\set{\floor{\mu^{-1}},1}}
\newcommand{\nBlocks}{\floor{n/\bSize}}
\newcommand{\lnrLBVal}{\nBlocks \log\big({\bContentSize + 1}\big)}
\newcommand{\lnrLB}{\lnrLBVal = \floor{\frac{n}{\ceil{\lnrErrSymbol / \ell}}} \log\big({\max\set{\floor{\ell / \lnrErrSymbol},1} + 1}\big)}
\newcommand{\lnrLBSymbol}{\mathcal B_{\ell,n,\lnrErrSymbol}}
\newcommand{\minDelta}{1}
\newcommand{\qsr}[1][\minDelta]{\ensuremath{(\ell,n,#1)}-\emph{Sliding Ranker}}
\newcommand{\lnrErrSymbol}{\Delta}
\newcommand{\sensitivity}{\ensuremath{\widetilde{\lnrErrSymbol}}}
\newcommand{\lnr}[1][\lnrErrSymbol]{\ensuremath{(\ell,n,#1)}-\emph{Ranker}}
\newcommand{\total}{\ensuremath{\mathit{total}}}
\newcommand{\subTotal}{\ensuremath{\mathit{subTotal}}}
\newcommand{\ceil}[1]{ \left\lceil{#1}\right\rceil}
\newcommand{\floor}[1]{ \left\lfloor{#1}\right\rfloor}
\newcommand{\parentheses}[1]{ \left({#1}\right)}
\newcommand{\biggParentheses}[1]{ \bigg({#1}\bigg)}
\newcommand{\logp}[1]{\log\parentheses{#1}}
\newcommand{\Omegap}[1]{\Omega\parentheses{#1}}
\newcommand{\Op}[1]{O\parentheses{#1}}
\newcommand{\Thetap}[1]{\Theta\parentheses{#1}}
\newcommand{\cdotpa}[1]{\cdot\parentheses{#1}}
\newcommand{\oneOverE}{ \eps^{-1} }
\newcommand{\oneOverT}{ \tau^{-1} }
\newcommand{\range}[2][0]{#1,1,\ldots,#2}
\newcommand{\orange}[1]{\set{1,2,\ldots,#1}}
\newcommand{\frange}[1]{\set{\range{#1}}}
\newcommand{\smallMultError}{(1+o(1))}
\newcommand{\sFactor}{(1+o(1))}
\newcommand{\sNegFactor}{(1-o(1))}
\newcommand{\brackets}[1]{\left[#1\right]}
\newcommand{\nBits}{\mathfrak b}
\newcommand{\ranker}{\ensuremath{\mathfrak R}}
\newcommand{\numBits}{\ensuremath{\mathit{numElems}}}
\newcommand{\setBits}{\ensuremath{\mathit{totalSum}}}
\newcommand{\lastBit}{\ensuremath{\mathit{oldest_\newInputLetter}}}
\newcommand{\outBits}{\ensuremath{\mathit{out}}}
\newcommand{\outBitsVal}{\ensuremath{\parentheses{\nu - \parentheses{(i-o)\mod \nu}}}}
\newcommand{\bs}{{\sc Basic-Summing}}
\newcommand{\add}  [1][] { {\sc Add}$(#1)$}
\newcommand{\window}{n}
\newcommand{\logw}{\log \window}
\newcommand{\blockOffset}{o}
\newcommand{\inputVariable}{x}
\newcommand{\bsrange}{ \ell }
\newtheorem{thm}{Theorem}[section] 
\theoremstyle{plain} 
\newcommand{\thistheoremname}{}
\newtheorem{genericthm}[thm]{\thistheoremname}
\renewcommand{\qed}{\nobreak \ifvmode \relax \else
	\ifdim\lastskip<1.5em \hskip-\lastskip
	\hskip1.5em plus0em minus0.5em \fi \nobreak
	\vrule height0.75em width0.5em depth0.25em\fi}
\def\_{\,\,\,\,\,}
\newcommand{\eps}{\epsilon}
\newcommand{\set}[1]{\left\{#1\right\}}
\begin{document}
	
\crefformat{footnote}{#2\footnotemark[#1]#3}

\title{Succinct Approximate Rank Queries}

%
\date{}
\author{\ \\Ran Ben Basat\\
	Department of Computer Science, Technion\\
	\texttt{sran@cs.technion.ac.il}}
\authorrunning{Ran Ben-Basat} 
%
%
%
%
%
%
\keywords{Streaming, Network Measurements, Statistics, Lower Bounds}
%
\maketitle
%
\abstract{%
	We consider the problem of summarizing a multi set of elements in $\orange{n}$ under the constraint that no element appears more than $\ell$ times. The goal is then to answer \emph{rank} queries --- given $i\in\orange{n}$, how many elements in the multi set are smaller than $i$? --- with an additive error of at most $\lnrErrSymbol$ and in constant time.
	For this problem, we prove a lower bound of $\lnrLBSymbol \triangleq \floor{\frac{n}{\ceil{\lnrErrSymbol / \ell}}} \log\big({\max\set{\floor{\ell / \lnrErrSymbol},1} + 1}\big)$ 
	bits and provide a \emph{succinct} construction that uses $\lnrLBSymbol\sFactor$ bits.
	Next, we generalize our data structure to support processing of a stream of integers in $\frange{\ell}$, where upon a query for some $i\le n$ we provide a $\lnrErrSymbol$-additive approximation for the sum of the \emph{last} $i$ elements.
	We show that this too can be done using $\lnrLBSymbol\sFactor$ bits and in constant time. 
	This yields the first sub linear space algorithm that computes approximate sliding window sums in $O(1)$ time, where the window size is given at the query time; additionally, it requires only $\sFactor$ more space than is needed for a fixed window size.
}
\section{Introduction}
\label{sec:intro}

\subsection{Background}
Static dictionaries are data structures that encode a set $S\subseteq\orange{n}$ and efficiently answer \emph{membership queries} of the form ``is $i\in S$?'' (for some $i\in\orange{n}$).
This problem was extensively studied and memory efficient data structures that allow $O(1)$ time queries for it were suggested for several different models~\cite{brodnik1999membership,fredman1984storing,pagh2001low}. 

An extension of the dictionary problem is the \emph{Rank} query, which given an identifier $i\le n$ returns the number of elements in $S$ that are smaller than or equal to $i$. For this problem as well, multiple papers proposed space efficient solutions with a constant query time~\cite{Jacobson,Raman2007,Raman99staticdictionaries}.
The inverse problem, called \emph{Select query}, asks for the ID of the $i^{th}$ smallest element in $S$ and was also shown to have space efficient data structures that support constant time queries~\cite{clark1998compact, Raman2007}.

A seemingly different research area is the design of streaming algorithms.
For many domains, such as networking, economics and databases, the ability to process large data streams is vital.
As data varies over time, recent data is often considered more relevant; this motivated the study of \emph{sliding window} algorithms, in which only the last $n$ elements are of interest.
The sliding window model was studied for many problems such as summing~\cite{SWATPAPER,DatarGIM02,GibbonsT02}; counting the number of distinct elements~\cite{basat2017efficient,Fusy2007}; finding frequent elements~\cite{WCSS,HungLT10}; answering set membership queries~\cite{SWAMP,SlidingBloomInfocom,SlidingBloomFilter}; and other problems~\cite{BabcockDMO03,Graphs,TopKSlidingWindow,TopKPairsSlidingWindows}.
All these works share a common goal -- they significantly reduce the memory consumption; in return, they settle for approximate, rather than exact, solutions.
Given sufficient space, we can solve such problems exactly simply by adding the newly arriving item into our summary and deleting the element that has left the window.
However, in many applications the window size is too large and the memory requirement becomes a major bottleneck. In this paper, we show how rank queries can be used for streaming.

Even if modern RAM memories seem to be enough for storing large element sequences, there are many advantages in minimizing the memory requirements.
Routers, for example, often rely on the scarce SRAM which allows access at the speed in which they are required to route packets. If the measurement algorithms are not compact enough to fit into the small SRAM, they must access the slower DRAM that does not allow real time queries. This can be a significant limitation for applications that require timely insights about the traffic, such as load balancing or denial of service attack identification. Similarly, when implementing in software we can gain speed if we fit our algorithm into the CPU cache and reduce DRAM access. Smaller data structures might even fit in a single cache line and can be pinned there to maximize the measurement performance.

The works mentioned above significantly reduce the space requirements compared to storing the entire window in memory.
However, these algorithms assume that the window size is known in advance and their data structures only allow queries about the predetermined window size to be answered efficiently.
While we can maintain a different sketch for every window size that is of interest, this may be prohibitively expensive in terms of both memory and update time.
Further, the goal of these algorithms is to enable memory feasible solutions to what would otherwise require storing the exact window in memory; 
thus, duplicating the data structures for multiple window sizes undermines the purpose for which they were created. 


\subsection{Our Contributions}
Our first contribution is the extension of exact succinct rankers to multi sets in which every element can appear at most $\ell$ times. Previous works have considered multi sets under cardinality constraint for all elements combined. Here we address the natural case where every element may appear at most $\ell$ times, but no cardinality constraint (smaller than $n\cdot\ell$) is known for the multi set. Our approach requires $\sFactor n\logp{\ell+1}$ bits and allows $O(1)$ time rank queries.

Our next contribution are novel \emph{approximate} set and multi set representations that allow computing rank queries with an additive error of $\lnrErrSymbol$, while using less space than required for storing the multi set itself. 
For this problem, we prove a $\lnrLBSymbol = \floor{\frac{n}{\ceil{\lnrErrSymbol / \ell}}} \log\big({\max\set{\floor{\ell / \lnrErrSymbol},1} + 1}\big)$ 
bits lower bound and a propose a succinct data structure that uses $\lnrLBSymbol\sFactor$ bits. 
To the best of our knowledge, this is the first algorithm that provides approximate rank queries in $O(1)$ time using less memory than the set / multi set encoding requires.

Next, we extend the notion of approximate rankers to streams and propose algorithms that process a stream of integers in $\frange{\ell}$ and answer sliding window sum queries in $O(1)$ time. Unlike previous works~\cite{basat2017efficient,SWATPAPER,DatarGIM02}, we get the window size \emph{at query time}. That is, our algorithm can compute the sum of \emph{any} window size while previous works assume that the size is fixed. Interestingly, our construction is succinct even when compared with the lower bound derived in~\cite{SWATPAPER} for fixed size windows. Thus, with a $\sFactor$ space overhead we allow the algorithm to support all window sizes. 
This is a major improvement over the naive approach of maintaining a separate algorithm instance for every window size that is of interest, in both space and~time~complexity.

We note that our approach also allows approximating the sum of \emph{historical intervals} that can be used for drill-down queries. For example, assume that we are monitoring a 100Gbps link on a backbone router such that at each second we get the utilized bandwidth (i.e., we can set $\ell=100\cdot2^{30}$ bits). Now, assume that we identify a distributed denial of service attack and want to study the link utilization pattern before and during the attack. Our algorithm allows us to estimate the bandwidth between any time interval $t_1-t_2$ (for $t_2\le t_1\le n$) simply by subtracting the estimate for the sum of the last $t_2$ seconds from the estimate of the last $t_1$ seconds' sum. 

\section{Related Work}
\subsection{Dictionaries}
Consider a set $S\subseteq\orange{n}$. A \emph{dictionary} is a data structure that supports \emph{membership queries} of the form ``Is $x$ in $S$?''. Several hashing-based works proposed methods for efficiently encoding $S$ while supporting constant time membership queries~\cite{brodnik1999membership,fredman1984storing,pagh2001low,tarjan1979storing}.
Dictionaries were then naturally extended to the \emph{Indexable Dictionary} problem that also supports the operations:
\begin{enumerate}
	\item \textbf{Rank$(i)$}: given $i\in\orange{n}$, return $|\set{y\in S:y\le i}|$.
	\item \textbf{Select$(i)$}: given $i\in\orange{|S|}$, return the $i^{th}$ smallest element in $S$.
\end{enumerate}
The problem of storing sets (and multisets, with the appropriate generalizations of the Rank and Select procedures) drew lots of attention from the research community~\cite{Jacobson,Raman2007,Raman99staticdictionaries}.
Of special interest to us is the work of Jacobson~\cite{Jacobson} that allows constant time rank queries using $n+o(n)$ memory. Jacobson's idea was to look at the characteristic vector of the set, i.e., a $\set{0,1}^n$ bits vector whose $i^{th}$ entry is set if $i\in S$. Thus, the Rank query reduces to counting the number of set bits that precede some index $i$ given at query time. To achieve this, Jacobson breaks the vector into $(\log n)^2$ sized \emph{chunks}. At the end of each chuck, Jacobson keeps the number of set bits that precede it. Since there are $n/(\log n)^2$ such chucks, and each is encoded using $\log n$ bits, this requires $n/\log n = o(n)$ bits. Next, Jacobson focuses on each specific chunk and divides it into a sequence of $(1/2\cdot\log n)$-sized \emph{sub-chunks}. At the end of each sub-chunk, Jacobson stores its number of preceding set bits \emph{within the current chunk} using $O(\log\log n)$ bits. Once again, the number of sub-chunks is $n/(1/2\cdot\log n)$ so the total memory required is $O(n \frac{\log\log n}{\log n})=o(n)$ bits. Finally, Jacobson counts the number of set bits within each sub-chunk using a lookup table. In the table, the keys are all binary vectors of size at most $1/2\cdot\log n$ and the values are the number of set bits; thus, the table's overall memory consumption is $O(\sqrt n \log n \log\log n) = o(n)$.

In this paper, we present a succinct structure for rank queries of multi sets in which each element appears at most $\ell$ times. This is different than the multi set representations of~\cite{pagh2001low,Raman2007} that considered cardinality constraint for the entire multi set, but without any further restriction on the number of appearances of a single item. 
We also provide an encoding that supports additive approximations of rank queries in less memory than required for encoding the multi set. 

\subsection{Algorithms that Sum over Sliding Windows}
Approximating the sum of the last $n$ elements over an integer stream, known as \bs{}, was first introduced by Datar et al.~\cite{DatarGIM02}. 
They assumed that each element is in $\frange{\bsrange}$ and proposed a $(1+\eps)$ multiplicative approximation algorithm.
Their data structure, named {Exponential Histogram} $(\mathit{EH})$,  is based on keeping timestamps of element sequences called buckets such that the last $n$ elements fit into $O(\oneOverE\logp{\bsrange\cdot n})$ buckets. Each bucket requires $O(\log n)$ bits to store the timestamp in addition to $O(\log{\logp{\bsrange\cdot n}})$ bits to store the bucket size.
Overall, the number of bits required by their algorithm is $O\parentheses{\oneOverE\parentheses{\log^2 n +\log\bsrange\cdotpa{\logw+\log\log\bsrange}}}$ and it operates in amortized time $\Op{\frac{\log\bsrange}{\logw}}$ or $O(\log (\bsrange\cdot n))$ worst case.
The EH approach was then extended in~\cite{BabcockDMO03} for other statistics over sliding windows, such as median and variance.
In~\cite{GibbonsT02}, Gibbons and Tirthapura presented a $(1+\eps)$ multiplicative algorithm that operates in constant worst case time while using similar space for $\bsrange=\window^{O(1)}$.
In~\cite{SWATPAPER}, we studied the potential memory savings one can get by replacing the $(1+\eps)$ multiplicative guarantee with a $\lnrErrSymbol$ additive approximation. We showed that $\Thetap{\frac{\bsrange\cdot n}{\lnrErrSymbol}+\logw}$ bits are required and sufficient.

In a sliding window, the last $n$ elements get similar weight while older items do not affect the sum.
Cohen and Strauss~\cite{CohenS06} considered more general aging models where older data has lower weight, but the rate in which the weight decreases may be different than that of sliding~windows.

Recently, we studied~\cite{basat2017efficient} the affect that allowing an error in the \emph{window size} has on the required memory of approximate summing algorithms. Specifically, we showed that if upon a query the algorithm is required to return a tuple $\langle w, \widehat{S_w}\rangle$ such that $w\in\{\window,\window+1,\ldots,\window(1+\tau)\}$ and $|\widehat {S_w} - S_w| < \lnrErrSymbol$ then $\Thetap{\oneOverT\logp{\frac{\tau\cdot\bsrange\cdot n}{\lnrErrSymbol}}+\logw}$ bits are needed.

All of the algorithms above assume that the window size is fixed. 
Here, we propose solutions that are \emph{succinct}, even when compared to a lower bound derived here for static data, or to the bound for a fixed size window as in~\cite{SWATPAPER}. 

It is worth mentioning that these data structures \emph{do} allow computing the sum of a window whose size is given at the query time. Alas, the query time will be slower as they do not keep aggregates that allow quick computation. Specifically, we can compute a $(1+\epsilon)$ multiplicative approximation using a slightly extended version of EH~\cite{DGIM02} in $O(\log\oneOverE+\log\log n)$ time by a binary search for the block with the right timestamp. We can also use the data structure of~\cite{SWATPAPER} for an additive approximation of $\lnrErrSymbol$  in $O\parentheses{\min\set{\frac{\ell\cdot\window}{\lnrErrSymbol},n}}$ time, and utilize~\cite{basat2017efficient}'s structure for a $(\tau,\lnrErrSymbol)$-approximation in time $O(\oneOverT)$. 
In this paper we offer solutions that operate in $O(1)$~time.

\section{Preliminaries}

We say that an algorithm is \emph{succinct} if it uses $\mathcal B(1+o(1))$ bits, where $\mathcal B$ is the information-theoretic lower bound for the problem it solves. Throughout the paper, we assume the standard word RAM model with a word size of $\Thetap{\log n + \log \ell}$. For simplicity of presentation, we also assume that ${n / (\log n)^2}$ and $\sqrt{\log n}$ are integers.

\begin{definition}[Approximation]
	Given a value $V$ and a constant $\epsilon>0$, we say that $\widehat{V}$ is an \emph{$\epsilon$-additive approximation} of $V$ if $V - \eps < \widehat{V} \le  V$.\footnote{We use one-sided error, and strict inequality as this simplifies our computations.}
\end{definition}
%
%
%
Next, we define the notion of an \lnr{} -- a structure that can answer approximate rank queries in a memory efficient manner.
Specifically, \lnr[\minDelta] is a succinct encoding of a multi-set over $\frange{n},$ such that no element appears more than $\ell$~times, that supports $O(1)$ time rank queries. 

\begin{definition}[Static Ranker]
An \lnr, for some $\ell,n,\lnrErrSymbol\in\mathbb N^+$, is an algorithm that preprocesses a sequence in $\set{0,1,\ldots,\ell}^n$ and when queried with some $i\le n$ returns a $\lnrErrSymbol$-additive approximation $\widehat{S_i}$ of the sum of the \emph{first} $i$ elements, ${S_i}$, in $O(1)$ time. 
\end{definition}
We proceed with the definition of a Sliding Ranker, extending \lnr{}s to streams, while focusing on the \emph{last} elements in the stream for supporting sliding window queries.
\begin{definition}[Sliding Ranker]
An \qsr[\lnrErrSymbol], for some $\ell, n,\lnrErrSymbol\in\mathbb N^+$, is an algorithm that processes a stream of integers in $\frange{\ell}$ and when queried for some $i\le n$ returns
a $\lnrErrSymbol$-additive approximation $\widehat{S_i}$ of the \emph{last} $i$ elements sum, ${S_i}$, in $O(1)$ time. 
\end{definition}
%



\section{\lnr{}s}
In order to construct an \lnr{}, we first discuss the special case of zero-error ($\lnrErrSymbol=\minDelta$).
\subsection{An \lnr[\minDelta]}
Here, we provide a succinct construction of \lnr[\minDelta], for any $\ell,n$.
Intuitively, this generalizes Jacobson's ranker~\cite{Jacobson} that addresses binary sequences ($\ell=1$). As we show, his lookup table approach works for ``small'' values of $\ell$. 
In other cases, such as $\ell \ge n$, we can split the vector into smaller and smaller intervals (i.e., sub-sub-chunk, etc.), but if the number of levels is constant, storing a lookup table for the smallest level is infeasible in $o(\mathcal B)$ space.
Thus, we use a different trick for large $\ell$ values for computing within-sub-chunk sums in $O(1)$. 
We avoid keeping the \emph{and} the characteristic vector; instead, we keep a $n$-sized array in which each entry contains the sum of the sub-chunk up to that point. For example, if the sub-chunk was $\langle 1,0,1\rangle$, we store $\langle 1,1,2\rangle$ regardless of vector entries outside this sub-chunk. Since $\ell$ is ``large'', this takes $o(\mathcal B)$ space.

We start by noting that since the number of sequences in $\frange{\ell}^n$ is $(\ell+1)^n$, any algorithm that computes such rank queries (exactly) requires $\mathcal B \triangleq{n\log (\ell+1)}$ bits. We also note that without a query time constraint this is achievable, as we can simply store the entire data and when queried sum the required interval in $O(n)$ time. 
Thus, if $n=O(1)$ (i.e., we have a small array of potentially large numbers), then the same idea works and we therefore require $\mathcal B$ bits; hence, we hereafter assume that 
\ensuremath{n = \omega(1)}.
Next, we will prove the following:
\begin{theorem}\label{thm:exactRanker}
For any $\ell,n\in\mathbb N^+$, there exists an \lnr[\minDelta] that uses $\mathcal B\smallMultError$ bits.
\end{theorem}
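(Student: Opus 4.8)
The plan is to adapt Jacobson's three-level decomposition, using the right tool at the innermost level depending on the size of $\ell$ relative to $n$. The target bound $\mathcal B(1+o(1)) = n\log(\ell+1)(1+o(1))$ is the information-theoretic lower bound, so every auxiliary structure must cost $o(n\log(\ell+1))$ bits. Write each element in $\frange{\ell}$ using its natural $\ceil{\log(\ell+1)}$-bit representation; the raw data is then a bit-vector of length roughly $n\log(\ell+1)$. **First I would** set up the coarse level: partition the sequence of $n$ elements into \emph{chunks} of $(\log n)^2$ consecutive elements, and at the boundary of each chunk store the exact prefix sum of all elements before it. Each such sum is at most $n\ell$, costing $O(\log n + \log\ell) = O(\log n + \log(\ell+1))$ bits, and there are $n/(\log n)^2$ chunks, so this level costs $O\!\left(\frac{n(\log n + \log(\ell+1))}{(\log n)^2}\right) = o(n\log(\ell+1))$ bits (using $n=\omega(1)$). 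Second, within each chunk, partition into \emph{sub-chunks} of $\frac12\log n$ elements and store, at each sub-chunk boundary, the partial sum \emph{within the current chunk}; these values are at most $(\log n)^2\ell$, so $O(\log\log n + \log(\ell+1))$ bits each, and there are $O(n/\log n)$ of them in total, again $o(n\log(\ell+1))$ bits.

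**The remaining task** is to answer, in $O(1)$ time, the sum of a prefix of a single sub-chunk (the $s \le \frac12\log n$ elements from the start of the sub-chunk containing the query index up to that index). Here I would split on the magnitude of $\ell$. \emph{Case 1 ($\ell$ small, say $\log(\ell+1) \le \sqrt{\log n}$):} use Jacobson's lookup-table trick directly on the encoded bits. The concatenation of a sub-chunk's elements occupies at most $\frac12\log n \cdot \ceil{\log(\ell+1)} \le \frac12(\log n)^{3/2}$ bits, which is still $o(\log n)$ — wait, that is too large; instead pick the sub-chunk length so that the encoded sub-chunk fits in $\frac12\log n$ bits, i.e. use sub-chunks of $\frac{\log n}{2\ceil{\log(\ell+1)}}$ \emph{elements}. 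The table is then indexed by a key of $\le \frac12\log n$ bits together with a within-sub-chunk offset, has $O(\sqrt n \cdot \log n)$ entries each holding a value of $O(\log\log n + \log\ell)$ bits, for a total of $O(\sqrt n\, \mathrm{polylog}(n,\ell)) = o(n\log(\ell+1))$ bits (since the word size is $\Theta(\log n+\log\ell)$ and $\log\ell = O(\sqrt{\log n})$ in this case). One must re-verify that the two outer levels still cost $o(\mathcal B)$ with this adjusted sub-chunk size — they do, since the number of sub-chunks only grows by a $\ceil{\log(\ell+1)}$ factor, which is dominated. \emph{Case 2 ($\ell$ large, $\log(\ell+1) > \sqrt{\log n}$):} the lookup table is infeasible, so instead of (or in addition to) the raw data, store an auxiliary array $P[1\mathinner{\ldotp\ldotp}n]$ where $P[j]$ is the sum of the elements from the start of $j$'s sub-chunk up to and including position $j$. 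Each $P[j] \le \frac12\log n \cdot \ell$, costing $O(\log\log n + \log\ell)$ bits, so the whole array costs $n\cdot O(\log\log n + \log\ell)$ bits. Since $\log\ell > \sqrt{\log n}$ we have $\log\log n = o(\log\ell) = o(\log(\ell+1))$, hence this is $n\log(\ell+1)(1+o(1))$ — but we still also need the outer two levels, whose cost $o(n\log(\ell+1))$ is absorbed, and crucially we must make sure we do \emph{not} also store the raw data (we recover individual elements, if ever needed, as $P[j]-P[j-1]$ within a sub-chunk). This gives the claimed $\mathcal B(1+o(1))$ total.

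**For the query algorithm:** given $i\le n$, locate the chunk and sub-chunk of position $i$ by integer division ($O(1)$), add the stored chunk-boundary prefix sum, add the stored sub-chunk-boundary within-chunk partial sum, and add the within-sub-chunk prefix sum obtained either from the lookup table (Case 1) or directly as $P[i]$ (Case 2) — wait, in Case 2 one can arguably skip the inner two levels entirely and answer from a global prefix-sum array, but that would cost $n\log(n\ell)$ bits, which is $\omega(\mathcal B)$ when $\ell$ is only moderately large, so the three-level structure is genuinely needed — all in $O(1)$ time. **The main obstacle I anticipate** is the bookkeeping in Case 1: making the sub-chunk length depend on $\ell$ so the encoded sub-chunk fits in a half-word, while simultaneously keeping all three levels of auxiliary storage at $o(n\log(\ell+1))$, and handling the boundary regime $\log(\ell+1) \approx \sqrt{\log n}$ where the two cases meet. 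A secondary subtlety is confirming the word-RAM model permits the needed arithmetic (prefix-sum lookups, divisions, table indexing) in $O(1)$ given word size $\Theta(\log n + \log\ell)$ — this is standard but should be stated. I would present Case 2 first (it is cleaner), then Case 1, then unify.
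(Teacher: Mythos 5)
Your proposal is correct and follows essentially the same route as the paper: a Jacobson-style chunk/sub-chunk decomposition with $o(\mathcal B)$-cost boundary prefix sums, a lookup table for the innermost level when $\ell$ is small, and, when $\ell$ is large, replacing the raw vector by per-element within-sub-chunk cumulative sums so that the dominant cost is $n\log(\ell+1)(1+o(1))$. The only differences are cosmetic parameter choices (your sub-chunk size $\frac{\log n}{2\lceil\log(\ell+1)\rceil}$ and threshold $\log(\ell+1)\le\sqrt{\log n}$ versus the paper's $\sqrt{\log n}$-element sub-chunks and threshold $\ell+1\le 2^{\sqrt[3]{\log n}}$), and both settings make all auxiliary structures $o(\mathcal B)$.
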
	
We start by breaking the sequence into chunks of size $\log^2 n$, keeping the cumulative sums at the end of each chunk. The required number of bits for these sums is at most 
\begin{align*}
\frac{n}{\log^2 n} \logp{\ell\cdot n + 1} \le \frac{n}{\log^2 n} \logp{(\ell+ 1)\cdot n} = n\logp{ \ell + 1} \cdotpa{\frac{1}{\log^2 n} +\frac{1}{\log \ell\log n} } = o(\mathcal B),
\end{align*}
where the last equation follows from \ensuremath{n = \omega(1)}.

Next, we break the chunks into sub-chunks of size $\sqrt{\log n}$ and keep the cumulative sum \emph{from the beginning of the most recent chunk} at the each sub-chunk's end. The memory consumption of these sub-chunk aggregates is then no more than
\begin{align*}
\frac{n}{\sqrt{\log n}} \logp{\ell\cdot \logp{\log^2 n} + 1} \le n\logp{ \ell + 1} \cdotpa{\frac{1}{\sqrt{\log n}} +\frac{2\log \log n}{\log \ell\sqrt{\log n}}} = o(\mathcal B).
\end{align*}

We are left with the task of efficiently computing the sub-chunk sums. Here, we split our construction depending on the relation between $\ell$ and $n$.
\begin{itemize}
\item {\bm{$\ell+1 \le 2^{\sqrt[3]{\log n}}$}.}\\
In this case, we adopt Jacobson's lookup table approach. Specifically, we create a lookup table $T:\frange{\ell}^{\sqrt{\log n}}\times \frange{\sqrt{\log n}-1}\to \frange{\ell\cdot\sqrt{\log n}}$; the key of each table entry is a $\sqrt{\log n}$-sized sequence of elements in $\frange{\ell}$ and an index $k\in\frange{\sqrt{\log n}-1}$. Its value is the sum of the first $k$ sequence entries.
In order to use the table, we also store the characteristic vector itself using $n\logp{\ell+1}$ bits.
The size of the table is then
\begin{align*}
(\ell+1)^{\sqrt{\log n}} \sqrt{\log n} \logp{\ell\sqrt{\log n}+1} \le 2^{\log^{5/6} n +\log\log n}  \logp{(\ell+1)\sqrt{\log n}} = o(\mathcal B).
\end{align*}
Thus our overall memory consumption is $n\logp{\ell+1}\cdot\smallMultError$.
Unfortunately, while we can consider smaller and smaller sequence aggregates, constructing such a lookup table will prevent the algorithm from being succinct when $\ell$ is large (e.g., for $\ell\ge n$). 
\item {\bm{$\ell+1 > 2^{\sqrt[3]{\log n}}$}.}\\
In this case, we return to the cumulative approach. \emph{Instead} of storing the characteristic vector (and without a lookup table) we store {for each element} the cumulative sum \emph{from the beginning of its sub-chunk}. Since the sub-chunks are of size $\sqrt{\log n}$, the number of bits this takes is
\begin{align*}
\hspace{-1cm}n\logp{\ell\sqrt{\log n}+1} \le n\logp{ \ell + 1}\cdotpa{1+\frac{\log\log n}{\logp{\ell+1}}} \le n\logp{ \ell + 1}\cdotpa{1+\frac{\log\log n}{\sqrt[3]{\log n}}} = \mathcal B+o(\mathcal B).
\end{align*}
\end{itemize}
We conclude that in all cases our construction requires $\mathcal B\smallMultError$ bits and is thus succinct.
\subsection{An \lnr{} for $\lnrErrSymbol>\minDelta$}\label{sec:lnr}
We start by proving a lower bound on the memory required by any \lnr{}. For convenience, we denote $\mu\triangleq \lnrErrSymbol / \ell$. We only consider $\lnrErrSymbol\in\set{2,\ldots,\ell\cdot n}$, as $\lnrErrSymbol=1$ means zero-error and $\lnrErrSymbol> n\cdot\ell$ allows the algorithm to always return $0$, regardless of the input.
\begin{theorem}
Let $\ell,n,\lnrErrSymbol\in\mathbb N^+$, then the number of bits required by any deterministic \lnr{} is at least
$$\lnrLBSymbol\triangleq \lnrLB.$$
\end{theorem}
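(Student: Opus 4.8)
The plan is to exhibit a family of inputs that any \lnr{} must distinguish, and then count how many distinct outputs (and hence states) are required. Partition the index set $\orange{n}$ into $\nBlocks = \floor{n/\bSize}$ consecutive blocks, each of size $\bSize = \ceil{\mu}$ (discarding the leftover indices at the end, which is why the floor appears). The idea is that within a block of $\ceil{\mu}$ slots we can encode a value in $\set{0,\ell,2\ell,\ldots}$ up to roughly $\ell\cdot\ceil{\mu} \approx \lnrErrSymbol$, so each block can independently be assigned one of $\bContentSize+1 = \max\set{\floor{\mu^{-1}},1}+1$ ``levels'' that a $\lnrErrSymbol$-additive approximation of prefix sums must tell apart. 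More precisely, for block $j$ choose a level $c_j \in \frange{\bContentSize}$ and fill the block with elements so that the block's internal sum is $c_j\cdot\lnrErrSymbol$ (this is feasible: we need $c_j\lnrErrSymbol \le \ell\cdot\bSize$, and since $c_j \le \floor{\mu^{-1}} = \floor{\ell/\lnrErrSymbol}$ we get $c_j\lnrErrSymbol \le \ell \le \ell\bSize$; when $\mu\ge 1$ only the level $0$ vs.\ $1$ distinction is used since a single block already spans an additive error). This gives $\bParentheses{\bContentSize+1}^{\nBlocks}$ distinct inputs.

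Next I would argue these inputs force distinct memory states. Fix two distinct level-vectors $(c_j)$ and $(c_j')$, and let $j^*$ be the \emph{last} block where they differ, say $c_{j^*} > c_{j^*}'$. Query at $i = $ the right endpoint of block $j^*$. Since the two inputs agree on all blocks after $j^*$ (there are none between the differing block and the query point that matter — we query exactly at the block boundary), the true prefix sums $S_i$ and $S_i'$ differ by $(c_{j^*}-c_{j^*}')\lnrErrSymbol \ge \lnrErrSymbol$. But a $\lnrErrSymbol$-additive approximation satisfies $S_i - \lnrErrSymbol < \widehat{S_i} \le S_i$, and likewise for the primed input; since $S_i \ge S_i' + \lnrErrSymbol$, the admissible intervals $(S_i-\lnrErrSymbol, S_i]$ and $(S_i'-\lnrErrSymbol,S_i']$ are disjoint, so the algorithm must return different answers on the two inputs. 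Hence it must be in different states before that query, and since the query is determined by the input, the $\bParentheses{\bContentSize+1}^{\nBlocks}$ inputs all require distinct states. Therefore the number of bits is at least $\log\BParentheses{\bParentheses{\bContentSize+1}^{\nBlocks}} = \nBlocks\log\bParentheses{\bContentSize+1} = \lnrLBVal = \lnrLBSymbol$.

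The step I expect to need the most care is the edge-case bookkeeping around floors and ceilings, i.e.\ checking that the block construction is actually realizable in both regimes $\mu < 1$ and $\mu \ge 1$ and that the level count $\bContentSize+1$ is exactly $\max\set{\floor{\ell/\lnrErrSymbol},1}+1$ as in the statement — in particular confirming that when $\mu\ge1$ a single block of $\ceil\mu$ slots genuinely realizes the two sums $0$ and $\lnrErrSymbol$ (needs $\lnrErrSymbol \le \ell\ceil\mu$, which holds since $\lnrErrSymbol \le \ell\cdot\mu\cdot\mu^{-1}\cdots$ — concretely $\ell\ceil{\mu}\ge\ell\mu=\lnrErrSymbol$). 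A secondary subtlety is making sure the query lands precisely on a block boundary so that blocks after $j^*$ never enter the queried prefix; this is why we query at block right-endpoints and why taking $j^*$ to be the \emph{last} differing block is the right choice. Everything else is a direct counting argument.
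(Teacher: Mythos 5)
Your construction and counting are essentially the paper's: the same partition into $\nBlocks$ blocks of size $\bSize$, the same $\bContentSize+1$ levels per block, and the same distinct-states counting at the end. However, the separation step contains a genuine error. You take $j^*$ to be the \emph{last} differing block and query the prefix ending at its right endpoint, claiming the two prefix sums differ by $(c_{j^*}-c_{j^*}')\lnrErrSymbol$. For a \emph{prefix} query (the Static Ranker returns the sum of the \emph{first} $i$ elements), blocks after $j^*$ are indeed irrelevant, but blocks \emph{before} $j^*$ lie inside the queried prefix, and since $j^*$ is only the last differing block those earlier blocks may differ as well, so the differences can cancel. Concretely, for level vectors $(1,0,0,\ldots)$ and $(0,1,0,\ldots)$ the last differing block is the second one, and the prefix sums at its right endpoint are both equal to $\lnrErrSymbol$; your chosen query then distinguishes nothing and the disjoint-intervals conclusion does not follow. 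Your closing remark that taking the last differing block ``is the right choice'' has it backwards for prefix queries (it would be the right choice for suffix / sliding-window queries).

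The fix is exactly what the paper does: let $j^*$ be the \emph{first} block where the level vectors differ and query $i=\bSize\cdot j^*$. Then all blocks preceding $j^*$ agree, so the prefix sums differ by at least $\lnrErrSymbol$ (by $\abs{c_{j^*}-c_{j^*}'}\cdot\lnrErrSymbol$ when $\mu\le 1$, and by at least $\bSize\cdot\ell\ge\lnrErrSymbol$ in the $\mu>1$ regime), and your argument that the one-sided error intervals $(S_i-\lnrErrSymbol,S_i]$ and $(S_i'-\lnrErrSymbol,S_i']$ are disjoint, hence the states must differ, goes through verbatim, as does the $\nBlocks\log\bParentheses{\bContentSize+1}$ bit count. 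The feasibility bookkeeping you flag (realizing the block sums in both regimes) is fine as written.
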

\begin{proof}
We denote 
$I\triangleq\set{\min\set{\lnrErrSymbol\cdot k,\ell} \mid k\in \frange{\bContentSize}}\subseteq\frange{\ell}$ and $\bar{I}\triangleq\set{\sigma^{\ceil{\mu}}\mid\sigma\in I}$. Next, consider all inputs that contain a sequence of $\nBlocks$ \emph{blocks} padded by zeros, such that each block is a member of $\bar I$; that is, consider 
$\mathcal I \triangleq \bar I^{\nBlocks}\cdot 0^{n\mod \ceil{\mu}}$. Notice that each literal is in the range $\frange{\ell}$ and that each input is of size $n$ as required. We show that every two inputs in $\mathcal I$ must lead to distinct configurations in the \lnr{}, thereby implying a $\ceil{\log |\mathcal I|}$ bits lower bound as required. Let 
$x_1=x_{1,1}x_{1,2}\cdots x_{1,\nBlocks}0^{n-(n\mod \ceil{\mu})},\ 
x_2=x_{2,1}x_{2,2}\cdots x_{1,\nBlocks}0^{n-(n\mod \ceil{\mu})}$ be two distinct inputs in $\mathcal I$ such that $x_{\alpha,\beta}\in \bar I$ for any $\alpha\in\set{1,2},\beta\in\set{1,\ldots,\nBlocks}$. Denote by $t\triangleq\min\set{\gamma\in\set{1,\ldots,\nBlocks}\mid x_{1,\gamma}\neq x_{2,\gamma}}$ the first block's index in which $x_1$ differs from $x_2$. 
Now consider a query for $i\triangleq \bSize\cdot t$. If $\mu \le 1$, then $\nBlocks=n$ and (due to the definition of $I$) $|x_{1,t}-x_{2,t}|\ge \lnrErrSymbol$, which implies an error of at least $\lnrErrSymbol$ for at least one of the inputs. On the other hand, $\mu > 1$ means that $I = \set{0,\ell}$ and thus either $x_{1,t}=0^{\bSize}, x_{2,t}=\ell^{\bSize}$ or $x_{1,t}=\ell^{\bSize}, x_{2,t}=0^{\bSize}$. In either case, the difference in sums is at least ${\bSize}\cdot\ell \ge \lnrErrSymbol$. We established that if two inputs in $\mathcal I$ lead to the same configuration, the error for one of them would be at least $\lnrErrSymbol$ while we assumed it is strictly lower.
\end{proof}
We now present a succinct construction of an \lnr{}.
Denote $\nu \triangleq \UBnItemsPerBlock, s\triangleq \UBnBlocks$ and $z\triangleq\floor{\mu^{-1}\nu}$.
For creating an \lnr{}, we first show how to ``compress'' the input into a smaller problem that we solve exactly. 
Intuitively, we create a new $s$-long input \newInput, such that each of its elements is bounded by $z$, and then employ a \ensuremath{(z,s,\minDelta)}-\emph{Ranker}, $\mathcal R$. Alas, if $\mu=\omega(1)$, this is not enough to allow succinct encoding; for this, we also compute the fraction of the input's sum that is not accounted for in \newInput{} and use it for answering queries.
Given an input $\bar x\in\frange{\ell}^n$, we create \newInput{} iteratively as follows\footnote{\label{note1}If $(n\mod\nu) \neq 0$, we implicitly define $\newInputLetter_{\ceil{\frac{n}{\nu}}}\triangleq0$ and $\mathcal R.Query\parentheses{\ceil{\frac{n}{\nu}}} \triangleq \mathcal R.Query\parentheses{\floor{\frac{n}{\nu}}}.$ }:
$$\forall k\in\set{1,2,\ldots,s}:\quad \newInputLetter_k \triangleq\floor{\lnrErrSymbol^{-1}\cdot{\sum_{d=n-\nu\cdot k+1}^{n}x_d}}-\sum_{\jmath=1}^{k-1} \newInputLetter_\jmath
.$$
Then, we compute the remainder:
\begin{equation}
\mathfrak r \triangleq{{\sum_{d=1}^{n}x_d}}-\lnrErrSymbol\cdot\sum_{\jmath=1}^{s} \newInputLetter_\jmath
. \label{eq:rem}
\end{equation}
After computing $\newInput\in\frange{z}^s$, we feed it into a \ensuremath{(z,s,\minDelta)}-\emph{Ranker} denoted $\mathcal R$. Given a query for some $i\le n$, we return\cref{note1} 
$${\sc Query}(i)\triangleq  \mathfrak r - \parentheses{\lnrErrSymbol - 1/2}+\lnrErrSymbol\cdotpa {\sum_{\jmath=\floor{\frac{n-i}{\nu}}+1}^s \newInputLetter_\jmath} -\ell\cdot\newInputLetter_{\ceil{\frac{n-i}{\nu}}}\cdot{(n-i)\mod \nu}\quad,$$ 
which we can compute in $O(1)$ as follows:
\footnote{We note that if our ranker $\mathcal R$ was originally constructed to compute the sum of the \emph{last} $i$ elements rather than the first, only two queries were needed.}
\begin{multline}
{\sc Query}(i)\triangleq  \mathfrak r - \parentheses{\lnrErrSymbol - 1/2} + \lnrErrSymbol\cdotpa {\mathcal R.Query\parentheses{s} - \mathcal R.Query\parentheses{\floor{\frac{n-i}{\nu}}}} \\
- \ell\cdotpa{(n-i)\mod \nu}\cdotpa {{\mathcal R.Query\parentheses{\ceil{\frac{n-i}{\nu}}} - \mathcal R.Query\parentheses{\floor{\frac{n-i}{\nu}}}}}.
\end{multline}
\begin{lemma}\label{lem:lnrCorrectness}
$$\parentheses{\sum_{d=1}^i x_d} - \lnrErrSymbol < {\sc Query}(i) < \sum_{d=1}^i x_d.
$$
 \end{lemma}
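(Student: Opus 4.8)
The plan is to unwind the definition of ${\sc Query}(i)$ into a single closed form of the shape ``$\sum_{d=1}^i x_d$ plus a small, controlled rounding error'', and then bound that error by a short case analysis on one integer. First I would record two elementary facts. Since $\mathcal R$ is a \emph{zero-error} $(z,s,\minDelta)$-Ranker, $\mathcal R.Query(j)=\sum_{\jmath=1}^j\newInputLetter_\jmath$ exactly for every $j\le s$ (and $=\sum_{\jmath=1}^s\newInputLetter_\jmath$ at $j=s+1$ by the footnote convention), so the two displayed forms of ${\sc Query}(i)$ coincide and I may work with the first. Moreover the recursion defining $\newInput$ telescopes: $\sum_{\jmath=1}^k\newInputLetter_\jmath=\big\lfloor\lnrErrSymbol^{-1}\sum_{d=n-\nu k+1}^n x_d\big\rfloor$ for all $k\le s$. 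I then fix $i$ and set $q\triangleq\floor{(n-i)/\nu}$, $r\triangleq(n-i)\bmod\nu$ (so $n-i=\nu q+r$, $0\le r<\nu$), and introduce $A\triangleq\sum_{d=n-\nu q+1}^n x_d$ (the last $\nu q$ inputs), $B\triangleq\sum_{d=n-\nu(q+1)+1}^{n-\nu q}x_d$ (``block $q+1$'', the $\nu$ inputs just before position $n-\nu q$, defined when $q<s$), and $C\triangleq\sum_{d=n-\nu q-r+1}^{n-\nu q}x_d$ (the $r$ boundary inputs). These satisfy $0\le C\le B\le\nu\ell$ and $B-C\le(\nu-r)\ell$.

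Next I would substitute. Using the telescoping identity together with $\mathfrak r=\sum_{d=1}^n x_d-\lnrErrSymbol\sum_{\jmath=1}^s\newInputLetter_\jmath$, the term $\lnrErrSymbol\sum_{\jmath=q+1}^s\newInputLetter_\jmath$ telescopes against $\mathfrak r$, leaving $\sum_{d=1}^n x_d-\lnrErrSymbol\floor{\lnrErrSymbol^{-1}A}=\sum_{d=1}^{n-\nu q}x_d+(A\bmod\lnrErrSymbol)$. Writing $\alpha\triangleq A\bmod\lnrErrSymbol\in\set{0,\ldots,\lnrErrSymbol-1}$ and peeling the boundary inputs off via $\sum_{d=1}^{n-\nu q}x_d=\big(\sum_{d=1}^i x_d\big)+C$, I get
\[
{\sc Query}(i)-\sum_{d=1}^i x_d\;=\;C+\alpha-\parentheses{\lnrErrSymbol-1/2}-\ell\cdot r\cdot\newInputLetter_{q+1},
\]
so it remains to show the right-hand side lies strictly between $-\lnrErrSymbol$ and $0$.

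Finally I would close by analysing $\newInputLetter_{q+1}$. If $r=0$ the last term and $C$ both vanish, and $\alpha\in\set{0,\ldots,\lnrErrSymbol-1}$ immediately gives the error in $[-\lnrErrSymbol+1/2,-1/2]$. So assume $r\ge1$; then $\nu\ge2$, hence $\nu\ell\le\lnrErrSymbol$ (because $\nu=\floor\mu\le\mu$). Writing $A+B=\lnrErrSymbol\floor{\lnrErrSymbol^{-1}(A+B)}+\beta$ with $\beta\in\set{0,\ldots,\lnrErrSymbol-1}$ gives $B=\lnrErrSymbol\newInputLetter_{q+1}+(\beta-\alpha)$; since $0\le B\le\nu\ell\le\lnrErrSymbol$ and $|\beta-\alpha|<\lnrErrSymbol$, this forces $\newInputLetter_{q+1}\in\set{0,1}$. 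If $\newInputLetter_{q+1}=0$ then $\alpha=\beta-B$, so $C+\alpha\le B+(\lnrErrSymbol-1-B)=\lnrErrSymbol-1$, and the error is in $[-\lnrErrSymbol+1/2,-1/2]$. If $\newInputLetter_{q+1}=1$ then $\alpha\ge\lnrErrSymbol-B$; combining with $C\le r\ell$ gives the upper bound $\le-1/2$, and combining $\alpha\ge\lnrErrSymbol-B$ with $C\ge B-(\nu-r)\ell$ and $\nu\ell\le\lnrErrSymbol$ gives $C+\alpha\ge\lnrErrSymbol-(\nu-r)\ell\ge r\ell$, hence the lower bound $\ge-\lnrErrSymbol+1/2$. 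The one remaining boundary situation — $i\le n\bmod\nu$, i.e. the query lands in the initial partial block, where $q=s$ and $\newInputLetter_{q+1}$ is read as $0$ — is handled by the same computation with $B$ taken to be that partial block, and is strictly easier.

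The part that needs the most care is the middle step: tracking exactly which window of inputs each of $\mathfrak r$, $\mathcal R.Query(s)$ and $\mathcal R.Query(q)$ accounts for, so that the telescoping is clean; and then, in the last step, noticing that $\ell r\newInputLetter_{q+1}$ must be compared against $C$ from above but against $C+\alpha$ from below — a crude one-sided bound fails — and that it is precisely the $-(\lnrErrSymbol-1/2)$ shift that converts the symmetric, $\pm\lnrErrSymbol$-type slack coming from the two floors into the required one-sided guarantee in $(-\lnrErrSymbol,0)$.
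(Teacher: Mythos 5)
Your main computation is correct and it is, in substance, the paper's own proof: your identity ${\sc Query}(i)-\sum_{d=1}^i x_d = C+\alpha-\parentheses{\lnrErrSymbol-1/2}-\ell\cdot r\cdot\newInputLetter_{q+1}$ is exactly the rearranged form \eqref{eq:complex} (your $\alpha$ is the paper's $\xi$, your $C$ is $\sum_{d=i+1}^{n-\nu\floor{(n-i)/\nu}}x_d$), and your relation $B=\lnrErrSymbol\newInputLetter_{q+1}+(\beta-\alpha)$ is the paper's characterization \eqref{eq:bitVal} of when $\newInputLetter_{\ceil{(n-i)/\nu}}=1$; the concluding split on $\newInputLetter_{q+1}\in\set{0,1}$ matches the paper's case analysis. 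If anything your organization is slightly cleaner: the $r=0$ case subsumes the paper's separate $\mu<2$ (i.e.\ $\nu=1$) case, and you obtain $\newInputLetter_{q+1}\in\set{0,1}$ inline rather than through the separate lemma bounding $\newInputLetter_k\le\floor{\mu^{-1}\nu}$.

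The one genuine gap is your last sentence, which dismisses the boundary situation $i<n\bmod\nu$ (where $q=s$, $\ceil{(n-i)/\nu}=s+1$, and $\newInputLetter_{s+1}\triangleq 0$ by the footnote convention) as ``the same computation, strictly easier.'' It is not: in the $\newInputLetter_{q+1}=0$ branch your upper bound rests on $\alpha+B\le\lnrErrSymbol-1$, which you derived from the floor recursion defining $\newInputLetter_{q+1}$; for the leading partial block, $\newInputLetter_{s+1}=0$ holds by fiat, not by that recursion, and nothing ties the partial block's sum to $\alpha$. Concretely, take $\ell=1$, $\lnrErrSymbol=4$ (so $\nu=4$, $s=1$), $n=7$, input $1,1,1,1,1,1,0$: then $\newInputLetter_1=\floor{3/4}=0$, $\mathfrak r=6$ by \eqref{eq:rem}, and for $i=1$ one gets ${\sc Query}(1)=\mathfrak r-\parentheses{\lnrErrSymbol-1/2}=2.5>1=\sum_{d=1}^{1}x_d$, so the claimed one-sided bound fails in that regime (the two-sided error is still below $\lnrErrSymbol$). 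To be fair, the paper's own proof has the same blind spot---\eqref{eq:bitVal} tacitly assumes $\ceil{(n-i)/\nu}\le s$, i.e.\ $i\ge n\bmod\nu$---so you have faithfully reproduced the argument together with its unhandled corner; but as a standalone proof, that sentence is the step that would fail, and repairing it requires restricting queries to $i\ge n\bmod\nu$, changing the convention for the leading partial block, or settling for a two-sided guarantee there.
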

\begin{proof}
We denote  the error in the representation of the \emph{last} $\nu\floor{\frac{n-i}{\nu}}$ items by
$$
\xi \triangleq \sum_{d=n-\nu\floor{\frac{n-i}{\nu}}+1}^{n} x_d - \lnrErrSymbol\cdot\sum_{\jmath={1}}^{\floor{\frac{n-i}{\nu}}} \newInputLetter_\jmath 
= \sum_{d=n-\nu\floor{\frac{n-i}{\nu}}+1}^{n} x_d - \lnrErrSymbol\floor{\lnrErrSymbol^{-1}\cdot{\sum_{d=n-\nu\floor{\frac{n-i}{\nu}}+1}^{n}x_d}}.
$$ 
Observe that $\xi = \parentheses{\sum_{d=n-\nu\floor{\frac{n-i}{\nu}}+1}^{n} x_d \mod \lnrErrSymbol}$ and~hence
\begin{align}
0\le\xi\le\lnrErrSymbol-1 \label{eq:errBounds}.
\end{align} Next, we use~\eqref{eq:rem} to obtain
\begin{align}
\sum_{d=1}^{i} x_d &= \sum_{d=1}^{n} x_d - \sum_{d=i+1}^{n} x_d = \mathfrak r  +\notag \lnrErrSymbol\cdot\sum_{\jmath=1}^{s} \newInputLetter_\jmath - \sum_{d=i+1}^{n} x_d \\
&= \mathfrak r  + \lnrErrSymbol\cdotpa{\sum_{\jmath=1}^{\floor{\frac{n-i}{\nu}}} \newInputLetter_\jmath + \sum_{\jmath=\floor{\frac{n-i}{\nu}}+1}^{s} \newInputLetter_\jmath} - \sum_{d=i+1}^{n-\nu\floor{\frac{n-i}{\nu}}} x_d - \sum_{d=n-\nu\floor{\frac{n-i}{\nu}}+1}^{n} x_d \notag\\
&= \mathfrak r - \xi + \lnrErrSymbol\cdot{\sum_{\jmath=\floor{\frac{n-i}{\nu}}+1}^{s} \newInputLetter_\jmath} - \sum_{d=i+1}^{n-\nu\floor{\frac{n-i}{\nu}}} x_d \notag\\
&= {\sc Query}(i) - \xi - \sum_{d=i+1}^{n-\nu\floor{\frac{n-i}{\nu}}} x_d + \lnrErrSymbol - 1/2 + \ell\cdot\newInputLetter_{\ceil{\frac{n-i}{\nu}}}\cdot{(n-i)\mod \nu}.\label{eq:complex}
\end{align}
We now perform a case analysis, based on the value of $\mu$ and start with the simpler case where $\mu < 2$. In this case, we have $\nu=1$ and thus we can rearrange \eqref{eq:complex} as:
\begin{align*}
{\sc Query}(i) - \sum_{d=1}^{i} x_d = \xi - \lnrErrSymbol + 1/2,
\end{align*}
and using \eqref{eq:errBounds} we immediately get $\parentheses{\sum_{d=1}^i x_d} - \lnrErrSymbol < {\sc Query}(i) < \sum_{d=1}^i x_d$.

Next, we focus on the case of $\mu \ge 2$. Thus, we hereafter have $\nu=\floor{\mu}$ and $\forall \jmath\in\orange{s}: \newInput_\jmath\in\set{0,1}$. We now consider if and when both $\newInputLetter_{\ceil{\frac{n-i}{\nu}}} = 1$ \emph{and} $(n-i)\mod \nu \neq 0$ (which implies $\ceil{\frac{n-i}{\nu}}=\floor{\frac{n-i}{\nu}}+1$); observe that 
{	\footnotesize
\begin{align}\hspace{-0.7cm}
\newInputLetter_{\ceil{\frac{n-i}{\nu}}} &=\floor{\lnrErrSymbol^{-1}\cdot{\sum_{d=n-\nu\cdot \ceil{\frac{n-i}{\nu}}+1}^{n}x_d}}-\sum_{\jmath=1}^{\ceil{\frac{n-i}{\nu}}-1} \newInputLetter_\jmath
=\floor{\lnrErrSymbol^{-1}\cdot{\sum_{d=n-\nu\cdot \ceil{\frac{n-i}{\nu}}+1}^{n}x_d}-\sum_{\jmath=1}^{\floor{\frac{n-i}{\nu}}} \newInputLetter_\jmath}\notag\\
&=\floor{\lnrErrSymbol^{-1}\cdot{\sum_{d=n-\nu\cdot \ceil{\frac{n-i}{\nu}}+1}^{n}x_d}+\lnrErrSymbol^{-1}\cdotpa{\xi - \sum_{d=n-\nu\floor{\frac{n-i}{\nu}}+1}^{n}} x_d}
=\floor{\lnrErrSymbol^{-1}\cdotpa{\xi + \sum_{d=n-\nu\floor{\frac{n-i}{\nu}}+1}^{n-\nu\cdot \ceil{\frac{n-i}{\nu}}}x_d}}.\notag
\end{align}
}\normalfont	
Thus, if $(n-i)\mod \nu\neq 0$, then 
\begin{align}
\newInputLetter_{\ceil{\frac{n-i}{\nu}}} = 1 \iff \xi + \sum_{d=n-\nu\floor{\frac{n-i}{\nu}}+1}^{n-\nu\cdot \ceil{\frac{n-i}{\nu}}}x_d \ge \lnrErrSymbol \iff 
\xi + \sum_{d=i+1}^{n-\nu\cdot \ceil{\frac{n-i}{\nu}}}x_d \ge \lnrErrSymbol -\sum_{d= n-\nu\floor{\frac{n-i}{\nu}}+1}^{i}x_d.\label{eq:bitVal}
\end{align}
Next, we split to cases based on the value of $\newInputLetter_{\ceil{\frac{n-i}{\nu}}}$:
\begin{itemize}
	\item $\newInputLetter_{\ceil{\frac{n-i}{\nu}}} = 1$.\quad
	In this case, according to~\eqref{eq:complex} and~\eqref{eq:bitVal} we have:
{	\footnotesize	
	\begin{align*}
		{\sc Query}(i) - \sum_{d=1}^{i} x_d &= \xi + \sum_{d=i+1}^{n-\nu\floor{\frac{n-i}{\nu}}} x_d -\parentheses{\lnrErrSymbol - 1/2 + \ell\cdot{(n-i)\mod \nu}} \\
		&\ge \lnrErrSymbol -\sum_{d= n-\nu\floor{\frac{n-i}{\nu}}+1}^{i}x_d -\parentheses{\lnrErrSymbol - 1/2 + \ell\cdot{(n-i)\mod \nu}}
		\\
		&= -\sum_{d= n-\nu\floor{\frac{n-i}{\nu}}+1}^{i}x_d + 1/2 - \ell\cdot{(n-i)\mod \nu}\\
		&\ge -\sum_{d= n-\nu\floor{\frac{n-i}{\nu}}+1}^{i}\ell + 1/2 - \ell\cdot{(n-i)\mod \nu}\\
		&\ge -(\ell\nu-1) + 1/2 = -(\ell\floor{\Delta/\ell}-1) + 1/2 \ge -\Delta + 1/2.
	\end{align*}
}\normalfont	
	On the other hand, we bound the error from above as follows:
	\begin{align*}
	{\sc Query}(i) - \sum_{d=1}^{i} x_d &= \xi + \sum_{d=i+1}^{n-\nu\floor{\frac{n-i}{\nu}}} x_d -\parentheses{\lnrErrSymbol - 1/2 + \ell\cdot{(n-i)\mod \nu}} \\
	&\le \lnrErrSymbol-1 + \sum_{d=i+1}^{n-\nu\floor{\frac{n-i}{\nu}}} \ell - \lnrErrSymbol - \ell\cdot{(n-i)\mod \nu} + 1/2\le -1/2.
	\end{align*}	
		\item $\newInputLetter_{\ceil{\frac{n-i}{\nu}}} = 0$.\quad
		Similarly to before, using~\eqref{eq:complex} and~\eqref{eq:bitVal} we get:
		\begin{align*}
		{\sc Query}(i) - \sum_{d=1}^{i} x_d &= \xi + \sum_{d=i+1}^{n-\nu\floor{\frac{n-i}{\nu}}} x_d -\parentheses{\lnrErrSymbol - 1/2} < \lnrErrSymbol -\sum_{d= n-\nu\floor{\frac{n-i}{\nu}}+1}^{i}x_d -\parentheses{\lnrErrSymbol - 1/2}\\
		&= -\sum_{d= n-\nu\floor{\frac{n-i}{\nu}}+1}^{i}x_d + 1/2 \le 1/2.
		\end{align*}		
		Now, we use the fact that both ${\sc Query}(i)$ and $\sum_{d=1}^{i} x_d$ are integers to deduce that ${\sc Query}(i) - \sum_{d=1}^{i} x_d < 1/2 \implies {\sc Query}(i) - \sum_{d=1}^{i} x_d \le 0$.
		Finally, we bound the error from above:
		\begin{align*}
		{\sc Query}(i) - \sum_{d=1}^{i} x_d &= \xi + \sum_{d=i+1}^{n-\nu\floor{\frac{n-i}{\nu}}} x_d -\parentheses{\lnrErrSymbol - 1/2} \ge -\lnrErrSymbol + 1/2.
		\end{align*}	
\end{itemize}
We conclude that in all cases we have $\parentheses{\sum_{d=1}^i x_d} - \lnrErrSymbol < {\sc Query}(i) < \sum_{d=1}^i x_d$.
\end{proof}
Next, we show that the value of each entry in $\newInput$ is smaller than $z$, as stated.
\begin{lemma}
For any $k\in\set{1,2,\ldots,s}$, $\newInputLetter_k\le \floor{\mu^{-1}\nu}$.
\end{lemma}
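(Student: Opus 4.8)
The plan is to put $\newInputLetter_k$ into closed form and then reduce the bound to an elementary estimate on a difference of two floor functions. For $k\in\set{0,1,\ldots,s}$ write $S_k\triangleq\sum_{d=n-\nu k+1}^{n}x_d$, so $S_0=0$ and $S_k-S_{k-1}=\sum_{d=n-\nu k+1}^{n-\nu(k-1)}x_d$ is the sum of a block of exactly $\nu$ input symbols, hence a nonnegative integer at most $\nu\ell=\mu^{-1}\nu\cdot\lnrErrSymbol$. The corrective term $-\sum_{\jmath<j}\newInputLetter_\jmath$ in the definition of $\newInputLetter_j$ is precisely minus the partial sum $\sum_{\jmath\le j-1}\newInputLetter_\jmath$, so the definition rearranges to the closed form $\sum_{\jmath\le k}\newInputLetter_\jmath=\floor{\lnrErrSymbol^{-1}S_k}$, and subtracting the $(k-1)$-instance gives
\[
\newInputLetter_k=\floor{\lnrErrSymbol^{-1}S_k}-\floor{\lnrErrSymbol^{-1}S_{k-1}} .
\]

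From here, nonnegativity $\newInputLetter_k\ge0$ --- which is also needed for $\newInput$ to be a legal Ranker input --- is immediate from monotonicity of $\floor\cdot$ together with $S_k\ge S_{k-1}$. For the upper bound, write $\lnrErrSymbol^{-1}S_{k-1}=\floor{\lnrErrSymbol^{-1}S_{k-1}}+f$ with $0\le f<1$; since $S_{k-1}$ is an integer, in fact $f\le(\lnrErrSymbol-1)/\lnrErrSymbol$, so that
\[
\newInputLetter_k=\Big\lfloor f+\lnrErrSymbol^{-1}(S_k-S_{k-1})\Big\rfloor\ \le\ \Big\lfloor \tfrac{\lnrErrSymbol-1}{\lnrErrSymbol}+\mu^{-1}\nu\Big\rfloor .
\]
I would finish along the same case split as in \cref{lem:lnrCorrectness}. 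If $\mu\ge1$ then $\nu=\floor\mu\le\mu$, so $\nu\ell\le\lnrErrSymbol$, i.e.\ $\mu^{-1}\nu\le1$, and the right-hand side is at most $1$ --- every $\newInputLetter_k$ is binary, exactly as the correctness proof relies on. If $\mu<1$ then $\nu=1$, the increment $S_k-S_{k-1}$ is a single symbol in $\set{0,\ldots,\ell}$, and the same estimate bounds $\newInputLetter_k$ by a quantity of size $\mu^{-1}$. In either regime $\newInputLetter_k$ lands in the alphabet $\frange{z}$, which is exactly what is required to feed $\newInput$ into the $(z,s,1)$-ranker $\mathcal R$.

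The step I expect to be the main obstacle is precisely this rounding estimate: a difference of floors $\floor a-\floor b$ with $a-b=\theta\ge0$ is in general $\floor\theta$ or $\floor\theta+1$, so one has to rule out the extra unit. The only leverage available is the integrality of the partial sums $S_{k-1},S_k$ (which is what upgrades $f<1$ to $f\le(\lnrErrSymbol-1)/\lnrErrSymbol$) together with the arithmetic fact that $\nu\ell\le\lnrErrSymbol$ when $\mu\ge1$ and $\nu=1$ when $\mu<1$; carrying these two facts through the floor manipulation is essentially the whole content of the proof. The $k=1$ boundary case is immediate, since there $\sum_{\jmath<1}\newInputLetter_\jmath=S_0=0$.
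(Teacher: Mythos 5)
Your reduction is correct as far as it goes: the telescoping identity $\sum_{\jmath\le k}\newInputLetter_\jmath=\floor{\lnrErrSymbol^{-1}S_k}$, hence $\newInputLetter_k=\floor{\lnrErrSymbol^{-1}S_k}-\floor{\lnrErrSymbol^{-1}S_{k-1}}=\floor{f+\lnrErrSymbol^{-1}(S_k-S_{k-1})}\le\floor{\frac{\lnrErrSymbol-1}{\lnrErrSymbol}+\mu^{-1}\nu}$ is valid, and it is essentially the same manipulation the paper performs. The gap is the step you explicitly defer --- ruling out the extra unit so as to land on $\floor{\mu^{-1}\nu}$ rather than $\ceil{\mu^{-1}\nu}$. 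That step is not just the main obstacle; it cannot be carried out, because the stated bound fails whenever $\mu^{-1}\nu\notin\mathbb N$. Take $\ell=3$, $\lnrErrSymbol=2$ (so $\mu=2/3$, $\nu=1$, $\floor{\mu^{-1}\nu}=1$) and $x_{n-1}=x_n=3$: then $\newInputLetter_1=\floor{3/2}=1$ and $\newInputLetter_2=\floor{6/2}-1=2>1$. Likewise, for non-integral $\mu>1$ your case analysis only yields $\newInputLetter_k\le1$, whereas the stated target $\floor{\mu^{-1}\nu}=\floor{\floor{\mu}/\mu}$ is $0$ there (e.g.\ $\ell=2$, $\lnrErrSymbol=3$, $x_{n-1}=x_n=2$ gives $\newInputLetter_2=1$), so even within your own plan the bound you prove and the bound you claim do not coincide. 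What your floor estimate actually delivers --- and all that is true in general --- is $\newInputLetter_k\le\ceil{\mu^{-1}\nu}$.

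For comparison, the paper's proof telescopes in the same way and stumbles at exactly the point you flagged: it replaces $\lnrErrSymbol^{-1}\sum_{d\le\nu(k-1)}x_d$ by its floor inside the outer floor and asserts ``$\le$'', but flooring only decreases the argument, so that step yields ``$\ge$''; the claimed conclusion $\newInputLetter_k\le\floor{\mu^{-1}\nu}$ does not follow, consistent with the counterexamples above. The ceiling bound your argument does establish is what the rest of the paper actually uses: Lemma~\ref{lem:lnrCorrectness} only needs $\newInputLetter_\jmath\in\set{0,1}$ when $\mu\ge2$ (and $\ceil{\floor{\mu}/\mu}=1$), and the space analysis is already stated with per-symbol cost $\log\parentheses{\ceil{\mu^{-1}}+1}$. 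So the productive fix is not to try to squeeze out the extra unit via integrality of $S_{k-1},S_k$, but to redefine $z\triangleq\ceil{\mu^{-1}\nu}$ (equivalently, to state the lemma with a ceiling) and keep your derivation verbatim; the downstream correctness and space bounds go through unchanged.
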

\begin{proof}
Notice that $\newInputLetter_1 = \floor{\lnrErrSymbol^{-1}\cdot{\sum_{d=1}^{\nu}x_d}}\le \floor{\lnrErrSymbol^{-1}\cdot\nu\ell}=\floor{\mu^{-1}\nu}$. For other $k$ values, we have
{	\footnotesize
\begin{align*}
\newInputLetter_k &= \floor{\lnrErrSymbol^{-1}\cdot{\sum_{d=1}^{\nu\cdot k}x_d}}-\sum_{\jmath=1}^{k-1} \newInputLetter_\jmath
= \floor{\lnrErrSymbol^{-1}\cdot{\sum_{d=\nu\cdot (k-1)+1}^{\nu\cdot k}x_d} +\parentheses{ \lnrErrSymbol^{-1}\cdot{\sum_{d= 1}^{\nu\cdot (k-1)}x_d}-\sum_{\jmath=1}^{k-2} \newInputLetter_\jmath}-\newInputLetter_{k-1}}\\
&\le \floor{\lnrErrSymbol^{-1}\cdot{\sum_{d=\nu\cdot (k-1)+1}^{\nu\cdot k}x_d} + \parentheses{\floor{\lnrErrSymbol^{-1}\cdot{\sum_{d= 1}^{\nu\cdot (k-1)}x_d}}-\sum_{\jmath=1}^{k-2} \newInputLetter_\jmath}-\newInputLetter_{k-1}}
\\
&=\floor{\lnrErrSymbol^{-1}\cdot{\sum_{d=\nu\cdot (k-1)+1}^{\nu\cdot k}x_d} }
\le \floor{\lnrErrSymbol^{-1}\cdot\nu\ell}=\floor{\mu^{-1}\nu}.\qedhere
\end{align*}
}
\normalfont
\end{proof}
%
We now bound $\mathfrak r$ for analyzing the space of our construction; the proof appears in Appendix~\ref{apx:remVal}.
\begin{lemma}\label{lem:remVal}
For any input $x\in\frange{\ell}^n$, the remainder in~\eqref{eq:rem} satisfies $\mathfrak r < 2\lnrErrSymbol$.
\end{lemma}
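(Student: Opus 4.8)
The goal is to show $\mathfrak r \triangleq \sum_{d=1}^{n}x_d - \lnrErrSymbol\sum_{\jmath=1}^{s}\newInputLetter_\jmath < 2\lnrErrSymbol$. The natural first step is to recognize that the sum $\sum_{\jmath=1}^{s}\newInputLetter_\jmath$ telescopes: by definition $\sum_{\jmath=1}^{k}\newInputLetter_\jmath = \floor{\lnrErrSymbol^{-1}\sum_{d=n-\nu k+1}^{n}x_d}$, so in particular for $k=s$ we obtain $\sum_{\jmath=1}^{s}\newInputLetter_\jmath = \floor{\lnrErrSymbol^{-1}\sum_{d=n-\nu s+1}^{n}x_d}$. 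Hence
\begin{align}
\mathfrak r = \sum_{d=1}^{n}x_d - \lnrErrSymbol\Bfloor{\lnrErrSymbol^{-1}\sum_{d=n-\nu s+1}^{n}x_d}
= \sum_{d=1}^{n-\nu s}x_d + \BParentheses{\sum_{d=n-\nu s+1}^{n}x_d \bmod \lnrErrSymbol}.\notag
\end{align}
The second term is at most $\lnrErrSymbol-1 < \lnrErrSymbol$, so it remains to bound the prefix $\sum_{d=1}^{n-\nu s}x_d$, which is the mass sitting in the first $n-\nu s$ elements that are not covered by any full block of size $\nu$.

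The key quantitative step is therefore to show $n-\nu s < \lnrErrSymbol/\ell$, equivalently $n - \nu\UBnBlocks < \mu$, since then $\sum_{d=1}^{n-\nu s}x_d \le \ell\cdot(n-\nu s) < \ell\mu = \lnrErrSymbol$ and we are done, with room to spare. I would split on the definition of $\nu = \UBnItemsPerBlock$. If $\mu \ge 1$ then $\nu = \floor{\mu}$ and $s = \floor{n/\floor{\mu}}$, so $n - \floor{\mu}\floor{n/\floor{\mu}} < \floor{\mu} \le \mu$, which is exactly what is needed. If $\mu < 1$ then $\nu = 1$ and $s = \floor{n}= n$, so $n - \nu s = 0 < \mu$ trivially. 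In both cases $n-\nu s < \mu$, hence $\sum_{d=1}^{n-\nu s}x_d < \ell\mu = \lnrErrSymbol$.

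Combining the two bounds gives $\mathfrak r < \lnrErrSymbol + \lnrErrSymbol = 2\lnrErrSymbol$, as claimed. There is no real obstacle here; the only thing to be careful about is the telescoping identity for $\sum\newInputLetter_\jmath$ (making sure the $-\sum_{\jmath=1}^{k-1}\newInputLetter_\jmath$ correction terms cancel correctly, which they do by an immediate induction on $k$) and the case split on whether $\mu<1$ or $\mu\ge1$ so that the definition of $\nu$ and $s$ is handled uniformly. If one wants the slightly sharper $\mathfrak r \le 2\lnrErrSymbol - 1$ it follows from the same computation since the $\bmod$ term is $\le \lnrErrSymbol-1$ and the prefix term is an integer strictly below $\lnrErrSymbol$, but the stated bound $\mathfrak r < 2\lnrErrSymbol$ is all that is needed.
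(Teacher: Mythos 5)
Your proof is correct and follows essentially the same route as the paper: collapse $\sum_{\jmath=1}^{s}\newInputLetter_\jmath$ via the telescoping definition to $\floor{\lnrErrSymbol^{-1}\sum_{d=n-\nu s+1}^{n}x_d}$, bound the resulting modulo term by $\lnrErrSymbol-1$, and bound the uncovered prefix of $n-\nu s\le\nu-1$ elements by less than $\lnrErrSymbol$ using $\nu\le\mu$ (with the $\mu<1$, $\nu=1$ case handled separately). No gaps; the paper's argument is the same calculation phrased through $(\nu-1)\ell$ instead of $n-\nu s<\mu$.
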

Follows is an analysis of our ranker.
\begin{lemma}
Let $\ell,n,\lnrErrSymbol\in\mathbb N^+$ and $\mu\triangleq\lnrErrSymbol/\ell$. The number of bits required by our ranker is $\sFactor\cdot\floor{n/\max\set{\floor{\mu},1}}\cdot\log\big({\ceil{\mu^{-1}} + 1}\big)$.
\end{lemma}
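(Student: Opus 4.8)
The plan is to bound the total space as the sum of three contributions: the internal $\ensuremath{(z,s,\minDelta)}$-Ranker $\mathcal R$, the stored remainder $\mathfrak r$, and any bookkeeping (which is $O(\log n + \log\ell)$ and hence negligible). For $\mathcal R$, I would invoke \Cref{thm:exactRanker} with the parameters $z = \floor{\mu^{-1}\nu}$ and $s = \floor{n/\nu}$, where $\nu = \UBnItemsPerBlock$: this gives $s\logp{z+1}\smallMultError$ bits. For the remainder, \Cref{lem:remVal} gives $\mathfrak r < 2\lnrErrSymbol$, so storing it costs at most $\logp{2\lnrErrSymbol} = \logp{\lnrErrSymbol} + 1 = O(\log\ell + \log\mu)$ bits, which I will argue is $o\bParentheses{s\logp{z+1}}$ in the relevant regime.

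The crux is then a case analysis on $\mu$ to show $s\logp{z+1} = \floor{n/\max\set{\floor{\mu},1}}\cdot\log\bParentheses{\ceil{\mu^{-1}}+1}\smallMultError$ and that the other terms are lower-order. If $\mu \le 1$, then $\nu = 1$, so $s = n = \floor{n/\max\set{\floor\mu,1}}$ and $z = \floor{\mu^{-1}}$; here I must bridge $\floor{\mu^{-1}}$ to $\ceil{\mu^{-1}}$ inside the logarithm, noting $\logp{\floor{\mu^{-1}}+1} \le \logp{\ceil{\mu^{-1}}+1} \le \logp{\floor{\mu^{-1}}+2} = \logp{\floor{\mu^{-1}}+1}\smallMultError$ (using that $\mu^{-1}\ge 1$ so $\floor{\mu^{-1}}+1\ge 2$). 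If $\mu > 1$, then $\nu = \floor\mu$, so $s = \floor{n/\floor\mu} = \floor{n/\max\set{\floor\mu,1}}$, and $z = \floor{\mu^{-1}\floor\mu} = \floor{\floor\mu/\mu}$; since $1/2 < \floor\mu/\mu \le 1$ (as $\mu>1$ forces $\floor\mu \ge 1 > \mu/2$), we get $z = 0$, hence $\logp{z+1} = 1 = \log\bParentheses{\ceil{\mu^{-1}}+1}$ because $\ceil{\mu^{-1}} = 1$ when $\mu > 1$. In this subcase I also need the remainder term $O(\log\ell + \log\mu) = O(\log\ell)$ to be $o(s) = o\bParentheses{\floor{n/\floor\mu}}$; this follows from $n = \omega(1)$ together with the standing restriction $\lnrErrSymbol \le \ell n$, which gives $\floor\mu \le n$ and hence $\floor{n/\floor\mu}\ge 1$, combined with a more careful argument using that $\floor\mu \le \mu \le \ell n$ bounds how small $s$ can be relative to $\log\ell$.

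I expect the main obstacle to be precisely this last point: verifying that the additive $O(\log\ell + \log n)$ overhead (from $\mathfrak r$ and bookkeeping) is genuinely $o\bParentheses{s\logp{z+1}}$ across all parameter regimes, since when $\mu$ is very large $s = \floor{n/\floor\mu}$ can be as small as $1$ while $\log\ell$ can be as large as $\Theta(\log n)$. The resolution is that when $s = O(1)$ we are in the regime $n = O(\mu) = O(\lnrErrSymbol/\ell)$, i.e. $\lnrErrSymbol = \Omega(n\ell)$, which the theorem's preamble excludes ($\lnrErrSymbol \le \ell n$ and the problem is non-degenerate), or more carefully: the lower bound $\lnrLBSymbol$ itself is then $O(1)$, so ``$\sFactor$'' times a constant absorbs the overhead; I will handle the $n = O(1)$ and $s = O(1)$ boundary cases separately exactly as the excerpt already did for \Cref{thm:exactRanker}. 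Modulo that edge-case handling, the argument is a direct substitution of the two cited lemmas/theorems into the decomposition $s\logp{z+1}\smallMultError + \logp{2\lnrErrSymbol} + O(\log n + \log\ell)$.
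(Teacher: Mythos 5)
Your argument is essentially the paper's own proof: it decomposes the space into the internal $(z,s,\minDelta)$-Ranker, bounded via Theorem~\ref{thm:exactRanker} by $\sFactor s\logp{z+1}$ bits, plus the remainder, bounded via Lemma~\ref{lem:remVal} by $O(\log\lnrErrSymbol)$ bits, and then substitutes $\nu=\max\set{\floor{\mu},1}$ to get the stated expression (the paper is in fact terser, simply asserting $O(\log \lnrErrSymbol)=o(\lnrLBSymbol)$ where you more carefully flag the small-$s$ regime). Your only slips are cosmetic and do not affect the upper bound: for $\mu>1$ one has $z=\floor{\floor{\mu}/\mu}\in\set{0,1}$ (it equals $1$ exactly when $\mu\in\mathbb N$), so the correct statement is $\logp{z+1}\le 1=\log\bParentheses{\ceil{\mu^{-1}}+1}$ rather than ``$z=0$, hence $\logp{z+1}=1$'', and in the $\mu\le 1$ case only the one-sided inequality $\logp{\floor{\mu^{-1}}+1}\le\log\bParentheses{\ceil{\mu^{-1}}+1}$ is needed (your claimed $\sFactor$ tightness fails for constant non-integer $\mu^{-1}$, but it is unnecessary for the bound).
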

\begin{proof}
Our construction has two components: the exact ranker $\mathcal R$ and the remainder $\mathfrak r$. As $\mathcal R$ is a \ensuremath{(z,s,\minDelta)}-\emph{Ranker}, where $s\triangleq \UBnBlocks$ and $z\triangleq\floor{\mu^{-1}\nu}$, it requires $\sFactor\cdot{s\logp{z+1}}$ bits according to Theorem~\ref{thm:exactRanker}. Recalling that $\nu=\UBnItemsPerBlock$ gives us the desired $\lnrLBSymbol\sFactor$ bound. Finally, Lemma~\ref{lem:remVal} tells us that $\mathfrak r < 2\lnrErrSymbol$ and can therefore be represented using $O(\log \lnrErrSymbol)= o(\lnrLBSymbol)$~bits.\qedhere
\end{proof}
\begin{theorem}\label{thm:lnrSuccinct}
Let $\ell,n,\lnrErrSymbol\in\mathbb N^+$ such that $(\mu=o(1))\vee (\mu=\omega(1)) \vee (\mu\in\mathbb N) \vee (\mu^{-1}\in\mathbb N)$, 
the construction above is an \lnr{} that uses $\lnrLBSymbol\smallMultError$ bits.\footnote{In other cases, our construction uses at most $B(2+o(1))$ bits but might not be succinct.}
\end{theorem}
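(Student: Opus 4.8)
The plan is to assemble three facts already proved above --- Lemma~\ref{lem:lnrCorrectness} (correctness), the preceding lemma that counts the ranker's bits, and Theorem~\ref{thm:exactRanker} (the exact sub-ranker) --- and then to spend the real effort reconciling the floors and ceilings in the four regimes listed in the hypothesis. Two things must be shown. First, that the construction is a legitimate \lnr{}: the $\lnrErrSymbol$-additive guarantee is precisely Lemma~\ref{lem:lnrCorrectness}, and the $O(1)$ query time is immediate since ${\sc Query}(i)$ performs a constant number of calls to $\mathcal R.Query$ --- each running in $O(1)$ by Theorem~\ref{thm:exactRanker} --- plus a constant number of arithmetic operations (including integer division) on $\Thetap{\log n+\log\ell}$-bit words. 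Second, that the space is $\lnrLBSymbol\smallMultError$: by the preceding space lemma (writing $\mu=\lnrErrSymbol/\ell$) the construction uses $\sFactor\cdot\floor{n/\UBnItemsPerBlock}\cdot\log\big({\ceil{\mu^{-1}}+1}\big)$ bits, whereas $\lnrLBSymbol=\floor{n/\ceil{\mu}}\log\big({\max\set{\floor{\mu^{-1}},1}+1}\big)$. These two expressions differ only in how the block size is rounded ($\UBnItemsPerBlock$ versus $\ceil{\mu}$) and how the per-block content bound is rounded ($\ceil{\mu^{-1}}$ versus $\max\set{\floor{\mu^{-1}},1}$), so the whole argument reduces to checking, in each of the four cases, that these rounding choices either coincide or change the product by at most a $\smallMultError$ factor.

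First I would dispatch the two ``exact'' cases. If $\mu\in\mathbb N$ then $\floor{\mu}=\ceil{\mu}=\mu\ge1$, so $\UBnItemsPerBlock=\ceil{\mu}$, while $\mu^{-1}\le1$ forces $\ceil{\mu^{-1}}=1=\max\set{\floor{\mu^{-1}},1}$; the two expressions are then literally identical and the space is $\lnrLBSymbol\sFactor$. If $\mu^{-1}\in\mathbb N$, write $m=\mu^{-1}\ge1$: then $\mu\le1$ gives $\UBnItemsPerBlock=1=\ceil{\mu}$, and $\floor{\mu^{-1}}=\ceil{\mu^{-1}}=m=\max\set{\floor{\mu^{-1}},1}$, again an exact match. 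For the ``asymptotic'' cases: if $\mu=o(1)$ then for all large $n$ we have $\UBnItemsPerBlock=1=\ceil{\mu}$, and since $\mu^{-1}\to\infty$ we get $\max\set{\floor{\mu^{-1}},1}=\floor{\mu^{-1}}$ and $\ceil{\mu^{-1}}\le\floor{\mu^{-1}}+1$, whence $\log\big({\ceil{\mu^{-1}}+1}\big)\le\log\big({\floor{\mu^{-1}}+2}\big)=\smallMultError\log\big({\floor{\mu^{-1}}+1}\big)$ and the product is $\lnrLBSymbol\smallMultError$. If $\mu=\omega(1)$ then symmetrically $\ceil{\mu^{-1}}=1=\max\set{\floor{\mu^{-1}},1}$ for large $n$, the logarithmic factors agree, and it remains only to compare $\floor{n/\floor{\mu}}$ with $\floor{n/\ceil{\mu}}$; since $\ceil{\mu}/\floor{\mu}=1+1/\floor{\mu}=1+o(1)$ we have $n/\floor{\mu}=\smallMultError\,(n/\ceil{\mu})$, and as long as $n/\mu\to\infty$ the outer floors wash out, giving $\floor{n/\floor{\mu}}=\smallMultError\,\floor{n/\ceil{\mu}}$ and hence $\lnrLBSymbol\smallMultError$.

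The hard part will be the proviso ``$n/\mu\to\infty$'' in the $\mu=\omega(1)$ branch. If it fails then $\mu=\Theta(n)$ (note $\mu\le n$ since $\lnrErrSymbol\le\ell n$), so $\floor{n/\floor{\mu}}$ and $\floor{n/\ceil{\mu}}$ are both $O(1)$ and can genuinely differ by a constant factor; worse, in that regime the lower-order cost of storing the remainder $\mathfrak r<2\lnrErrSymbol$ (Lemma~\ref{lem:remVal}), namely $\Op{\log\lnrErrSymbol}$ bits, ceases to be $o(\lnrLBSymbol)$. The way I would handle this is to observe that it is exactly the regime $\lnrErrSymbol=\Theta(\ell n)$ in which $\lnrLBSymbol$ is itself $O(1)$, so ``succinct up to $\smallMultError$'' has no content and one falls back on the weaker $(2+o(1))$-factor bound recorded in the footnote; for the $\smallMultError$ statement one therefore restricts the $\mu=\omega(1)$ branch to $\lnrErrSymbol=o(\ell n)$ (equivalently, $\lnrLBSymbol=\omega(\log\lnrErrSymbol)$), under which both $n/\mu\to\infty$ and the remainder term are harmless. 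Everything else is routine bookkeeping once Lemmas~\ref{lem:lnrCorrectness} and~\ref{lem:remVal}, Theorem~\ref{thm:exactRanker}, and the preceding space bound are in hand.
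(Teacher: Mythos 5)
Your proposal is correct and follows essentially the same route as the paper: use Lemma~\ref{lem:lnrCorrectness} for the $\lnrErrSymbol$-additive guarantee, the $O(1)$ query time from the constant number of calls to $\mathcal R$, and the preceding space lemma, then compare $\sFactor\cdot\floor{n/\max\set{\floor{\mu},1}}\cdot\log\big(\ceil{\mu^{-1}}+1\big)$ with $\lnrLBSymbol$ case by case in the four regimes. Your extra care in the $\mu=\omega(1)$ branch --- insisting on $n/\mu\to\infty$ (equivalently $\lnrErrSymbol=o(\ell n)$) so that the outer floors and the $O(\log\lnrErrSymbol)$ remainder cost are genuinely negligible --- addresses a regime the paper's own proof silently glosses over, so it is a tightening of the same argument rather than a different approach.
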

\begin{proof}
Recall that $\lnrLBSymbol= \lnrLBVal$ while our algorithm uses $\sFactor\cdot\floor{n/\max\set{\floor{\mu},1}}\cdot\log\big({\ceil{\mu^{-1}} + 1}\big)$ bits. If $\mu=o(1)$, we have $\lnrLBSymbol=n \log\big({\floor{\mu^{-1}} + 1}\big)=\sNegFactor n\log\mu^{-1}$ when our structure takes $\sFactor\cdot n\cdot\log\big({\ceil{\mu^{-1}} + 1}\big)=\sFactor n\log\mu^{-1}$ bits.
Similarly, if $\mu=\omega(1)$ then $\lnrLBSymbol=\floor{n/\bSize}=\sNegFactor\cdotpa{n/\mu}$ while we require $\sFactor\cdot\floor{n/\floor{\mu}}=\sFactor\cdotpa{n/\mu}$. The case for $(\mu=\Theta(1))\wedge((\mu\in\mathbb N) \vee (\mu^{-1}\in\mathbb N))$ follows from similar arguments.
\end{proof}

\section{\qsr[\lnrErrSymbol]s}
As in the case of static data rankers, we first consider the exact case where $\lnrErrSymbol=\minDelta$.
\subsection{An \qsr{}}\label{sec:qsr}
In this section, we provide a construction for an \qsr{} that requires $\mathcal B(1+o(1))$ bits, where $\mathcal B\triangleq n\logp{\ell + 1}$ is the information-theoretic lower bound even without considering sliding windows.
Intuitively, we adapt our \lnr[\minDelta] construction to the sliding window setting by incrementally building the chunks and sub-chunks. We start by breaking the stream into $n$-sized \emph{frames}. As in the original construction, we split the frames into $(\log n)^2$ sized \emph{chunks}, where each chunk is further divided into $\sqrt{\log n}$-sized \emph{sub-chunks}. 
In the case where $\ell+1 \le 2^{\sqrt[3]{\log n}}$, we keep a $O(2^{\log^{5/6} n +\log\log n}  \logp{\ell\sqrt{\log n}})$ sized lookup table that maps each sequence in $\frange{\ell}^{\le \sqrt{\log n}}$ to its sum.
If  $\ell+1 > 2^{\sqrt[3]{\log n}}$, we simply track the sums \emph{within a sub-chunk} by keeping the cumulative sum for each item.
We keep the chunk aggregates in a ${n / (\log n)^2}$-sized circular buffer, and the sub-chunk aggregates in a similar structure of size ${n / \sqrt{\log n}}$. 
Finally, we ``reset'' the frame accumulator every $n$ elements, so that each chunk's aggregate is always smaller than $n\cdot \ell$.
Since each of the chunk aggregates requires $O(\log (\ell n))$ bits and each of the sub-chunk aggregates takes $O(\log(\ell\log n))$ bits, our overall space consumption is as required.
Our \qsr{} construction is illustrated in Figure~\ref{fig:slidingRankerConstruction}, while the query procedure is exemplified in Figure~\ref{fig:slidingRankerQuery}.
In Appendix~\ref{apx:largeRslidingRanker} we provide an algorithm for the $\ell+1 > 2^{\sqrt[3]{\log n}}$ case; here, we hereafter assume that $\ell+1 \le 2^{\sqrt[3]{\log n}}$.
\begin{figure}[]
	\centering
	\includegraphics[width=\linewidth, height=3.0cm]{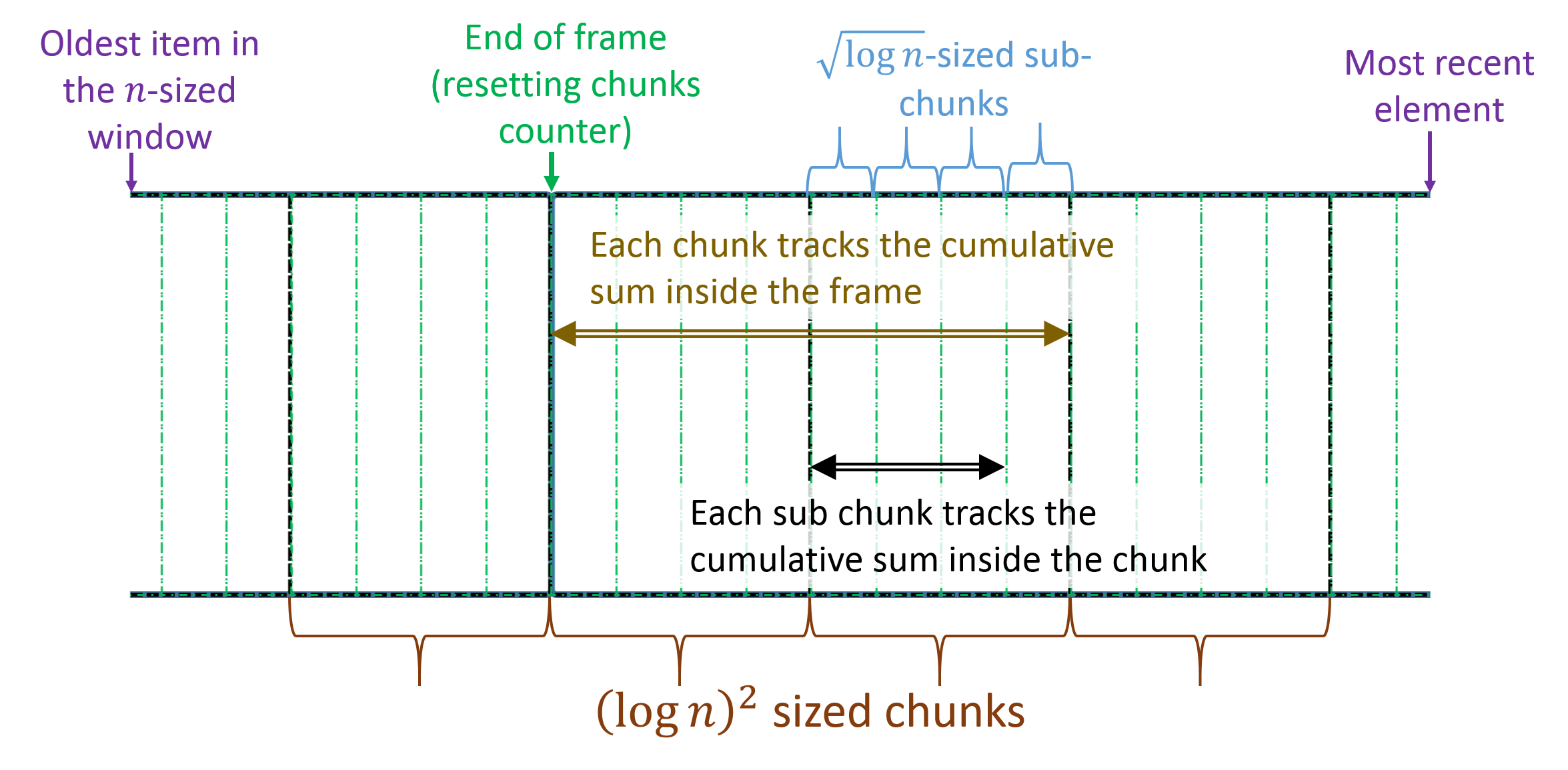}
	\caption{The \qsr{} construction of Algorithm~\ref{alg:qsr}. We split the stream into frames, the frames into chunks, and the chunks into sub-chunks. At the end of each chunk we keep the sum of all elements that preceded it in the frame. Similarly, for each sub-chunk we keep the sum of items from the beginning of its chunk.}
	\label{fig:slidingRankerConstruction}
\end{figure}
\begin{figure}[]
	\centering
	\includegraphics[width=\linewidth, height=2.5cm]{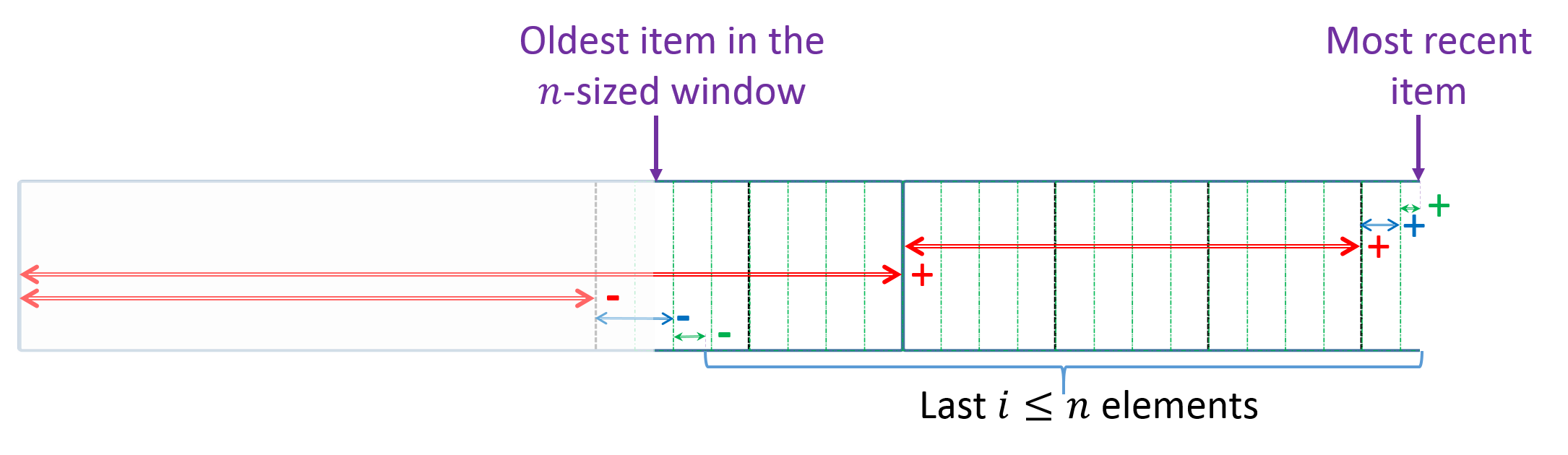}
	\caption{A query example of Algorithm~\ref{alg:qsr}. To compute the sum of the green ranges, we use either a lookup table or the within-sub-chunk aggregates depending on whether $\ell+1 \le 2^{\sqrt[3]{\log n}}$. The sub-chunks aggregate array allows us to retrieve the sum of the blue intervals, and the chunk aggregates contain the sum of the red ranges. We first add values to get the sum of elements from the beginning of the frame in which the $i$-long interval starts. Then, we subtract the sum of items that arrived more than $i$ elements ago.}
	\label{fig:slidingRankerQuery}
\end{figure}
Our algorithm uses the following variables:
\begin{itemize}
\item $C$ - a cyclic buffer of ${n / (\log n)^2}$ integers, each allocated with $\ceil{\log (\ell n + 1)}$ bits.
\item $SC$ - a cyclic buffer of $n/\sqrt{\log n}$ integers, each allocated with
$\ceil{\log(\ell\log^2 n + 1)}$ bits.
\item \total{} - the sum of elements inside the current frame.
\item $ind$ - the index of the most recent item, modulo $n$.
\item $T$ - a lookup table mapping sequences of length $\le \sqrt{\log n}$ to their sums.
\item $\mathcal W$ - the last $n$ elements window.
\end{itemize}
We give a pseudo code of our \qsr{} in Algorithm~\ref{alg:qsr}.
\begin{algorithm}[t]
	\caption{An \qsr{} for $\ell+1 \le 2^{\sqrt[3]{\log n}}$}\label{alg:qsr}
	\begin{algorithmic}[1]
		\Statex Initialization: ${C}\gets\bar{0},{SC}\gets\bar{0}, ind\gets 0, \total \gets 0, \mathcal W\gets\bar{0}$
		\Statex \qquad\qquad\qquad\qquad$T\gets \parentheses{\frange{\ell}^{\le \sqrt{\log n}}\to\frange{\ell\cdot\sqrt{\log n}}} \text{ lookup table}$
		\Function{Add}{element $x$}
		\State $ind\gets ind  + 1 \mod n$
		\State $\mathcal W[ind] \gets x$
		\If {$(ind \mod \sqrt{\log n}) = 0$}\Comment{End of a sub-chunk}
			\State $sum \gets T\bigg[\mathcal{W}\big[(1+ ind-\sqrt{\log n})\mod n,\ldots, ind\big]\bigg]$ \Comment{The sub-chunk's sum}
			\State $\total \gets \total + sum$			
			\State $SC[ind/\sqrt{\log n}] \gets sum$
			\If {$(ind \mod (\log n)^2) = 0$}\Comment{End of a chunk}
				\State $C[ind / (\log n)^2] \gets \total$
				\If {$(ind \mod n) = 0$} \Comment{End of a frame, reset counter}
					\State $\total \gets 0$
				\EndIf
			\EndIf
		\EndIf
		\EndFunction
		
		\Function{Query}{$i$} \Comment{For $i\le n$}
		\State $addition \gets 0$
		\If {$i \ge ind $} \Comment{If not contained in current frame}\label{line:qsr-query-ifSmallL}
			\State $addition \gets C[0]$
		\EndIf
		\State \Return \Comment{See Figure~\ref{fig:slidingRankerQuery}}\label{line:qsrQuerySmallL}\vspace*{-0.2cm}
			\begin{align*}
			T\bigg[\mathcal{W}\Big[ind - (&ind\mod \sqrt{\log n}) + 1,\ldots, ind\Big]\bigg] \\
			+ SC\bigg[\Big\lfloor ind& / \sqrt{\log n}\Big\rfloor\bigg] + C\bigg[{\Big\lfloor{ind / (\log n)^2}\Big\rfloor}\bigg] + addition\\
			- C\bigg[&\Big\lfloor{\parentheses{(ind-i)\mod n} / (\log n)^2}\Big\rfloor\bigg] - SC\bigg[{\floor{\parentheses{(ind-i)\mod n} / \sqrt{\log n}}}\bigg]\\
			&- T\bigg[\mathcal{W}\brackets{\big\lfloor{\parentheses{(ind-i)\mod n} / \sqrt{\log n}}\big\rfloor+1,\ldots,ind - i}\bigg]
			\end{align*}		\vspace*{-0.4cm}
		\EndFunction
	\end{algorithmic}
\end{algorithm}
We now formulate the properties of the algorithm; the theorem's proof is deferred to Appendix~\ref{apx:smallRSliderProof} due to lack of space.
\begin{theorem}\label{thm:smallRCorrectness}
Algorithm~\ref{alg:qsr} is an \qsr{} that uses $\mathcal B(1+o(1))$ memory~bits.
\end{theorem}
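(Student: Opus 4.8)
The statement has two parts: (i) correctness --- that Algorithm~\ref{alg:qsr}, on a query $i \le n$, returns exactly the sum of the last $i$ elements; and (ii) the space bound $\mathcal B(1+o(1))$. I would prove these separately. For correctness, the key observation is that the quantities stored by the algorithm are \emph{prefix sums within a frame}: after processing each sub-chunk, $SC$ holds the sum of that sub-chunk, $C$ holds the running frame total at each chunk boundary, and \total{} holds the frame total so far. The query then telescopes: the first three summands plus \emph{addition} compute $\Sigma_{\mathrm{head}}$, the sum of all elements from the start of the frame containing position $ind-i+1$ up to and including position $ind$; the last three summands compute $\Sigma_{\mathrm{tail}}$, the sum of all elements in that same frame that arrived strictly more than $i$ steps ago; and $\Sigma_{\mathrm{head}} - \Sigma_{\mathrm{tail}}$ is exactly $\sum_{d \text{ in last } i} x_d$. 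The \emph{addition} term handles the wrap-around case $i \ge ind$, where the $i$-window spans two frames: in that case $C[0]$ (the full previous frame's total) must be added back because $C[\lfloor ind/(\log n)^2 \rfloor]$ only reflects the current, freshly-reset frame.

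**Key steps, in order.** First I would set up invariants maintained by \textsc{Add}: (a) after any call, $ind$ is the index mod $n$ of the most recent element and $\mathcal W[ind]$ is its value; (b) for each completed sub-chunk index $j$ in the current frame, $SC[j]$ equals the sum of the $\sqrt{\log n}$ elements of sub-chunk $j$; (c) for each completed chunk index $j$ in the current frame, $C[j]$ equals the sum of all frame elements up to the end of chunk $j$; (d) \total{} equals the sum of frame elements up to the end of the most recent completed sub-chunk. These follow by induction on the number of \textsc{Add} calls, using that the lookup table $T$ (or, in the large-$\ell$ case of Appendix~\ref{apx:largeRslidingRanker}, the within-sub-chunk cumulative array) correctly returns sub-chunk sums. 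Next I would verify that the three ``within-structure'' lookups in \textsc{Query} --- the $T$ lookup over $\mathcal W[ind - (ind \bmod \sqrt{\log n})+1,\ldots,ind]$, the $SC$ lookup, the $C$ lookup --- together recover $\Sigma_{\mathrm{head}}$, decomposing position $ind$ into (chunk-aligned prefix) + (sub-chunk-aligned remainder) + (within-sub-chunk tail), exactly as in Figure~\ref{fig:slidingRankerQuery}. I would do the same for $ind - i$ (taken mod $n$) to get $\Sigma_{\mathrm{tail}}$, then handle the case split on whether $ind - i$ falls in the current frame ($i < ind$) or the previous one ($i \ge ind$), confirming the \emph{addition} correction. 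Finally, the space bound: $C$ uses $\frac{n}{(\log n)^2}\ceil{\log(\ell n+1)} = O(n/\log n) = o(\mathcal B)$ bits; $SC$ uses $\frac{n}{\sqrt{\log n}}\ceil{\log(\ell \log^2 n+1)} = O(n \log\log n/\sqrt{\log n}) = o(\mathcal B)$ bits (using $\log \ell \le \sqrt[3]{\log n}$); $T$ has size $o(\mathcal B)$ exactly as bounded in the proof of Theorem~\ref{thm:exactRanker}; the scalars $ind,\total$ take $O(\log(\ell n)) = o(\mathcal B)$ bits; and $\mathcal W$ takes $n\ceil{\log(\ell+1)} = \mathcal B(1+o(1))$ bits. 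Summing gives $\mathcal B(1+o(1))$.

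**Main obstacle.** The routine-but-error-prone part is the modular-index bookkeeping in \textsc{Query}: because $C$ and $SC$ are \emph{cyclic} buffers and \total{} is reset every $n$ steps, the expression must be argued correct simultaneously for the case where the $i$-window lies entirely in the current frame and the case where it wraps into the previous frame, and one must check that the ``stale'' entries of $C,SC,\mathcal W$ that get read for the $ind-i$ side are exactly the ones written during the previous frame (not overwritten yet, not two frames old) --- this is where the assumption that the query window length is at most $n$ is essential. I expect the cleanest route is to index everything by \emph{absolute} stream position $t$ (the position of the most recent element), write $ind = t \bmod n$, and show that for any $j$, the buffer slot currently holding the value for stream position $p \le t$ is overwritten only at stream position $p + n$; since we only ever read slots for positions $> t - n$, every read is current. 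Once that lemma is in place, the telescoping identity $\Sigma_{\mathrm{head}} - \Sigma_{\mathrm{tail}} = \sum_{d=t-i+1}^{t} x_d$ is a one-line consequence of the \textsc{Add} invariants, and correctness follows.
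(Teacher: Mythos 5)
Your route is essentially the paper's own: the paper likewise proves the space bound by term-by-term accounting (Lemma~\ref{lem:slidingMem}: $C$ and $SC$ contribute $o(\mathcal B)$, the window array dominates with $n\logp{\ell+1}$ bits, \total{} and $ind$ take $O(\log n)$), and proves correctness by expressing the answer as a head-minus-tail telescoping of within-frame prefix sums, matching each summand of Line~\ref{line:qsrQuerySmallL} to a stored aggregate and splitting on whether the queried window wraps into the previous frame (the $addition=C[0]$ correction when $i\ge ind$). Your explicit cyclic-buffer staleness lemma (a slot written for stream position $p$ is only overwritten at position $p+n$, and only positions within the last $n$ are ever read) is left implicit in the paper, but it is the same underlying argument, so on the whole this is the paper's proof.

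One step needs repair before your telescoping goes through: invariant (b), ``$SC[j]$ equals the sum of the $\sqrt{\log n}$ elements of sub-chunk $j$,'' is too weak for the decomposition you then invoke. The middle piece of $\Sigma_{\mathrm{head}}$ --- from the last chunk boundary to the last completed sub-chunk boundary --- in general spans \emph{several} completed sub-chunks, yet the query reads it off the single entry $SC\big[\floor{ind/\sqrt{\log n}}\big]$; the same issue recurs on the $\Sigma_{\mathrm{tail}}$ side. For this to be correct, $SC[j]$ must hold the \emph{cumulative} sum from the beginning of $j$'s chunk through the end of sub-chunk $j$ (this is exactly what Figure~\ref{fig:slidingRankerConstruction} and the paper's query expansion assume; note you did state $C$ in the cumulative form, so only (b) is off). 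State (b) cumulatively, maintain it with a within-chunk accumulator in \textsc{Add}, and your induction and the rest of the plan match the paper step for step. A smaller point on space: charging $n\ceil{\logp{\ell+1}}$ bits for $\mathcal W$ does not give $\mathcal B(1+o(1))$ when $\ell+1$ is a small non-power of two; you should pack, e.g., each sub-chunk in base $\ell+1$ (which the lookup table can consume directly), so that $\mathcal W$ costs $n\logp{\ell+1}+o(\mathcal B)$ bits, which is the figure the paper charges.
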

\subsection{An \qsr[{\lnrErrSymbol}] for $\lnrErrSymbol>\minDelta$}
Similarly to the way we used \lnr[\minDelta]s to construct \lnr{}s for any $\lnrErrSymbol\in\set{2,\ldots \ell\cdot n}$, we now use the exact \qsr{} for constructing an \qsr[\lnrErrSymbol].

Intuitively, we split the stream into \emph{blocks} of size $\nu$ and construct the remainder $\mathfrak r$ gradually; whenever a block ends, we compute a new $\newInputLetter_k$ value and feed it into an exact $(z,s,\minDelta)-\emph{Sliding Ranker}$ we use as a black box. When queried, we employ our exact ranker and remainder to estimate the relevant sum, similarly to our \lnr{} queries from Section~\ref{sec:lnr}. However, if we simply sum the elements using $\mathfrak r$, it will require $\Omega(\log{\ell})$ bits; this will not allow us to remain succinct if $\mu = \Omega(1)$ as the lower bound for this case is $\lnrLBSymbol=O(n)$ and is independent of $\ell$ (given that $\mu=\lnrErrSymbol/\ell$ is fixed).
To solve this, we follow~\cite{SWATPAPER}'s approach and round every arriving element, representing it using $\mathfrak b\triangleq\ceil{\logp{n/\mu}+\log\log n}$ bits. 
That is, if $x\in\frange\ell$ arrived, we consider $Round_{\nBits}(x)\triangleq2^{-\nBits}\ell\cdot\floor{\frac{x2^{\nBits}}{\ell}}$ instead.
To compensate for the rounding error, we will need blocks of size smaller than that we used in our \lnr{} construction; specifically, we set $\nu \triangleq \max\set{\floor{\mu\cdotpa{1-1/\log n}},1}$. Additionally, when $\mu<1$ the block size has to remain $1$, so we have to compensate for the rounding error by other means; this is achieved by reducing the ``sensitivity'' to $\sensitivity\triangleq \floor{\lnrErrSymbol\cdotpa{1-1/\log n}}$.\footnote{If $\sensitivity=\minDelta$, then we simply apply the exact algorithm from the previous subsection.}
The parameters for the exact ranker are then $s\triangleq \UBnBlocks$ and $z\triangleq\floor{\mu^{-1}\nu}$.
Our algorithm uses the following variables:
\begin{itemize}
	\item \ranker{} - a $(z,s,\minDelta)-\emph{Sliding Ranker}$, as described in Section~\ref{sec:qsr}.	
	\item $\mathfrak{r}$ - tracks the sum of elements that is not yet recorded in \ranker{}.
	\item $o$ - the offset within the block.

\end{itemize}
A pseudo code of our method appears in Algorithm~\ref{alg:approxSliding}.
\begin{algorithm}[]
	\caption{An \qsr[{\lnrErrSymbol}] algorithm}\label{alg:approxSliding}
	\begin{algorithmic}[1]
		\State Initialization: $\mathfrak{r} \gets 0, o \gets 0, \ranker \gets (z,s,\minDelta)-\emph{Sliding Ranker}.\text{init()}$
		\Function{\add[\text{element }\inputVariable]}{}
		\State $\blockOffset \gets (\blockOffset+1)\mod \nu$
		\State $\mathfrak{r} \gets \mathfrak{r} + Round_{\nBits}(x)$				\label{line:sum}
		\If {$\blockOffset = 0$}\label{line:end-of-block}
		\State $\newInputLetter \gets \floor{\sensitivity^{-1} \cdot\mathfrak{r}}$ \label{line:setBit}
		\State $\mathfrak{r}\gets \mathfrak{r} - \sensitivity\cdot\newInputLetter$ \label{line:reduceS}
		\State $\ranker.\mbox{\sc Add}(\newInputLetter)$
		\EndIf
		\EndFunction		
		\Function{Query}{$i$}
		\If{$i \le o$}
		\State \Return $\mathfrak{r} - \parentheses{\sensitivity - 1/2}$
		\Else
		\State $\numBits \gets \ceil{\frac{i-o}{\nu}}$
		\State $\setBits \gets \ranker.\mbox{\sc Query}\parentheses{\numBits}$
		\State $\lastBit \gets \setBits - \ranker.\mbox{\sc Query}\parentheses{\numBits-1}$
		\State $\outBits \gets \outBitsVal$
		\State \Return $\mathfrak r - \parentheses{\sensitivity - 1/2}+\sensitivity\cdot\setBits -\ell\cdot\lastBit\cdot\outBits$\label{line:est}
		
		\EndIf		
		\EndFunction
		
	\end{algorithmic}
\end{algorithm}
Next follows a memory analysis of the algorithm with a proof given in Appendix~\ref{apx:slidingQsrSpaceProof}.
\begin{lemma}\label{lem:slidingQsrSpaceProof}
Algorithm~\ref{alg:approxSliding} requires $\sFactor\cdot\floor{n/\max\set{\floor{\mu},1}}\cdot\log\big({\ceil{\mu^{-1}} + 1}\big) + O\parentheses{\log n}$~bits.
\end{lemma}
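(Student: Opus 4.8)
The plan is to account separately for the two components stored by Algorithm~\ref{alg:approxSliding}: the black-box sliding ranker $\ranker$ and the remainder $\mathfrak r$ (together with the $O(1)$ auxiliary counters $o$ and the constant-size bookkeeping). The dominant term should come from $\ranker$, and the strategy mirrors the static analysis already carried out for the \lnr{} construction; the only genuinely new ingredient is bounding the number of bits that $\mathfrak r$ needs once we round every incoming element to $\nBits$ bits.

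First I would treat $\ranker$. It is a $(z,s,\minDelta)$-Sliding Ranker with $s=\UBnBlocks$ and $z=\floor{\mu^{-1}\nu}$, so by Theorem~\ref{thm:smallRCorrectness} it occupies $\sFactor\cdot s\logp{z+1}$ bits. Substituting $\nu=\max\set{\floor{\mu\cdotpa{1-1/\log n}},1}$, I would argue — exactly as in the \lnr{} space lemma — that $s=\floor{n/\nu}=\sFactor\cdot\floor{n/\max\set{\floor\mu,1}}$ and $z+1=\sFactor\cdot(\ceil{\mu^{-1}}+1)$, where the $\sFactor$ slack absorbs the $(1-1/\log n)$ factor hidden inside $\nu$. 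This is routine and follows the same case split on $\mu=o(1)$ versus $\mu=\omega(1)$ versus $\mu=\Theta(1)$ used in Theorem~\ref{thm:lnrSuccinct}; it yields the claimed main term $\sFactor\cdot\floor{n/\max\set{\floor{\mu},1}}\cdot\log\big({\ceil{\mu^{-1}} + 1}\big)$.

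Next I would bound $\mathfrak r$, which is the only place the rounding enters. Between block boundaries, $\mathfrak r$ accumulates at most $\nu$ rounded elements, each of the form $Round_{\nBits}(x)=2^{-\nBits}\ell\floor{x2^{\nBits}/\ell}$, so $\mathfrak r$ stays within roughly $\nu\ell$ before a $\newInputLetter$ is extracted on line~\ref{line:setBit}, after which line~\ref{line:reduceS} subtracts $\sensitivity\cdot\newInputLetter$; by the same argument as in Lemma~\ref{lem:remVal} the residual is $O(\lnrErrSymbol)$. The key point is that every value of $\mathfrak r$ is an integer multiple of $2^{-\nBits}\ell$ lying in an interval of length $O(\lnrErrSymbol)$, hence $\mathfrak r$ can be stored in $O\big(\log\big(\lnrErrSymbol\cdot 2^{\nBits}/\ell\big)\big)=O\big(\log\big(\mu\cdot 2^{\nBits}\big)\big)$ bits. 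Plugging in $\nBits=\ceil{\logp{n/\mu}+\log\log n}$ gives $O(\log(n\log n))=O(\log n)$ bits, which is the additive $O(\log n)$ term in the statement (and is also $o$ of the main term whenever that term is nonconstant). Adding the $O(\log n)$ bits for $o$, $ind$, and the other scalars completes the count.

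The main obstacle I anticipate is purely the bookkeeping of constants: verifying that the $(1-1/\log n)$ shrinkage in $\nu$ (and, when $\mu<1$, the analogous shrinkage of $\lnrErrSymbol$ to $\sensitivity$) really is absorbed by a single $\sFactor$ factor in each of the three regimes of $\mu$, and checking that the rounded remainder genuinely takes values on a lattice of spacing $2^{-\nBits}\ell$ so that $O(\log(\mu 2^{\nBits}))$ bits suffice. Neither step is deep, but both must be done regime-by-regime to match the exact form of the bound claimed.
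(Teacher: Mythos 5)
Your decomposition is exactly the paper's: the dominant term is the $(z,s,\minDelta)$-Sliding Ranker's $\sFactor\cdot s\logp{z+1}$ bits (via Theorem~\ref{thm:smallRCorrectness}), which after substituting $\nu=\max\set{\floor{\mu\cdotpa{1-1/\log n}},1}$ gives the stated main term, while $\mathfrak r$ and $o$ account for the additive $O\parentheses{\log n}$. You actually justify the bit count for $\mathfrak r$ in more detail than the paper (which simply asserts a bound of $O(\nBits\log\nu)$ bits and substitutes); your lattice claim is only slightly imprecise, since the subtraction of integer multiples of $\sensitivity$ in line~\ref{line:reduceS} moves $\mathfrak r$ off the grid of multiples of $2^{-\nBits}\ell$, but this affects neither the approach nor the final count.
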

This allows us to conclude, similarly to Theorem~\ref{thm:lnrSuccinct}, that our algorithm is succinct if the error satisfies $\lnrErrSymbol=o\parentheses{\frac{\ell\cdot n}{\log n}}$. We also note that a $\floor{\log n}$ lower bound was shown in~\cite{SWATPAPER} even when only fixed sized windows (where $i \equiv n$) are considered. Thus, our algorithm always requires at most $O(\lnrLBSymbol)$, even if the allowed error is $\Omegap{\frac{\ell\cdot n}{\log n}}$.
\begin{corollary}
Let $\ell,n,\lnrErrSymbol\in\mathbb N^+$ such that $\mu\triangleq \lnrErrSymbol/\ell$ satisfies $$\parentheses{\mu=o\parentheses{\frac{n}{\log n}}} \wedge \brackets{(\mu=o(1))\vee (\mu=\omega(1)) \vee (\mu\in\mathbb N) \vee (\mu^{-1}\in\mathbb N)},$$ 
then Algorithm~\ref{alg:approxSliding} is succinct. For other parameters, it uses  $O(\lnrLBSymbol)$ space.
\end{corollary}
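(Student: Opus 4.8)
The plan is to read the corollary off Lemma~\ref{lem:slidingQsrSpaceProof} by re-running, on the dominant term of that space bound, exactly the case analysis from the proof of Theorem~\ref{thm:lnrSuccinct}, and then separately dealing with the extra additive $O(\log n)$ term using the hypothesis $\mu=o(n/\log n)$ together with the $\floor{\log n}$ lower bound of~\cite{SWATPAPER}. Write the bound of Lemma~\ref{lem:slidingQsrSpaceProof} as $M+O(\log n)$, where $M\triangleq\sFactor\cdot\floor{n/\max\set{\floor{\mu},1}}\cdot\log\big({\ceil{\mu^{-1}}+1}\big)$. The term $M$ is identical in form to the space consumption of the static construction, so I would invoke the proof of Theorem~\ref{thm:lnrSuccinct} essentially verbatim: it shows $M=\sFactor\lnrLBSymbol$ whenever one of $\mu=o(1)$, $\mu=\omega(1)$, $\mu\in\mathbb N$, $\mu^{-1}\in\mathbb N$ holds, and $M\le 2\lnrLBSymbol\sFactor=O(\lnrLBSymbol)$ in every other case. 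One point worth checking here is that the corollary's extra hypothesis $\mu=o(n/\log n)$ (hence $\mu=o(n)$, hence $n/\ceil\mu\to\infty$) is precisely what makes the floor/ceiling substitutions in that proof legitimate $(1+o(1))$-factor steps rather than merely constant-factor ones.

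Next I would absorb the additive $O(\log n)$. Since the logarithm in $\lnrLBSymbol=\floor{n/\ceil\mu}\cdot\log\big(\max\set{\floor{\mu^{-1}},1}+1\big)$ always has argument at least $2$, we have $\lnrLBSymbol\ge\floor{n/\ceil\mu}$; and $\mu=o(n/\log n)$ forces $\floor{n/\ceil\mu}=\omega(\log n)$ (if $\mu<1$ this is just $n$, and if $\mu\ge 1$ then $\ceil\mu\le 2\mu$, so $n/\ceil\mu\ge n/(2\mu)=\omega(\log n)$). Hence $O(\log n)=o(\lnrLBSymbol)$ in the succinct regime, and Algorithm~\ref{alg:approxSliding} uses $\lnrLBSymbol\sFactor+o(\lnrLBSymbol)=\lnrLBSymbol\sFactor$ bits. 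To conclude succinctness I would note that $\lnrLBSymbol$ is itself a lower bound for an \qsr[\lnrErrSymbol] --- feeding a length-$n$ sequence in reverse order to a sliding ranker answers static rank queries, so its state after $n$ elements already encodes the hard static instances --- which, together with $\floor{\log n}=o(\lnrLBSymbol)$, makes $\lnrLBSymbol\sNegFactor$ the information-theoretic optimum in this regime and the algorithm succinct. For the remaining parameters, $M=O(\lnrLBSymbol)$ as above, and the leftover $O(\log n)$ is matched by the $\floor{\log n}$ lower bound that~\cite{SWATPAPER} proved already for fixed windows; combining the two, the true cost is $\Omega(\lnrLBSymbol+\log n)$ and the algorithm's $O(\lnrLBSymbol+\log n)$ bits are within a constant factor of it --- the sense in which it uses $O(\lnrLBSymbol)$ space.

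The main obstacle, and the only place I expect to spend real care, is exactly this interaction of the additive $O(\log n)$ with $\lnrLBSymbol$: unlike the static bound, $\lnrLBSymbol$ can collapse to $\Theta(1)$ (e.g.\ when $\lnrErrSymbol$ is close to $\ell n$), so ``$O(\log n)=o(\lnrLBSymbol)$'' is false in general --- it is precisely to exclude this that the hypothesis $\mu=o(n/\log n)$ is imposed for succinctness, and precisely the $\floor{\log n}$ lower bound that gives the ``other parameters'' clause content. A secondary, purely mechanical nuisance is the $\floor\mu$-versus-$\ceil\mu$ mismatch between the algorithm's block size and the quantity $\lnrLBSymbol$; this costs only a $(1+o(1))$ factor under each of the four listed hypotheses (and merely a constant factor otherwise), which is exactly why those four hypotheses reappear in the statement.
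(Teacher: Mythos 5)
Your proposal is correct and follows essentially the same route as the paper, which justifies the corollary by combining Lemma~\ref{lem:slidingQsrSpaceProof} with the case analysis of Theorem~\ref{thm:lnrSuccinct}, absorbing the additive $O(\log n)$ term under $\mu=o(n/\log n)$, and invoking the $\floor{\log n}$ lower bound of~\cite{SWATPAPER} for the remaining parameter range. The only additions you make --- the explicit reverse-order reduction showing $\lnrLBSymbol$ is also a lower bound for sliding rankers, and the computation that $\mu=o(n/\log n)$ forces $\floor{n/\ceil{\mu}}=\omega(\log n)$ --- are correct fillings-in of steps the paper leaves implicit.
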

The following theorem, whose proof is deferred to Appendix~\ref{apx:slidingCorrectness} due to lack of space, shows the correctness of the algorithm.
\begin{theorem}\label{thm:slidingCorrecntess}
Algorithm~\ref{alg:approxSliding} is an \qsr[\lnrErrSymbol].
\end{theorem}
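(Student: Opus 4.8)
The plan is to obtain Algorithm~\ref{alg:approxSliding} as the composition of three ingredients already in hand: the exact \qsr{} of Section~\ref{sec:qsr} (whose correctness is Theorem~\ref{thm:smallRCorrectness}), the static reduction of Section~\ref{sec:lnr} (Lemma~\ref{lem:lnrCorrectness}), and a rounding pre-pass. Concretely, I would prove that after processing any stream of $N$ integers in $\frange\ell$, a call ${\sc Query}(i)$ with $i\le n$ returns a value $\widehat{S_i}$ with $S_i-\lnrErrSymbol<\widehat{S_i}\le S_i$, where $S_i\triangleq\sum_{d=N-i+1}^{N}x_d$, and that each operation runs in $O(1)$ time. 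The time bound is immediate from the pseudocode: {\sc Add} performs $O(1)$ arithmetic plus at most one call to $\ranker.{\sc Add}$, and {\sc Query} performs $O(1)$ arithmetic plus at most three calls to $\ranker.{\sc Query}$, each $O(1)$ by Theorem~\ref{thm:smallRCorrectness}; the memory claim is Lemma~\ref{lem:slidingQsrSpaceProof}, so only correctness remains. If $\sensitivity=\minDelta$ the algorithm reduces to that of Section~\ref{sec:qsr} and correctness is exactly Theorem~\ref{thm:smallRCorrectness}, so I assume $\sensitivity\ge2$.

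The first step is a state invariant. Write $o=N\bmod\nu$, $K=\floor{N/\nu}$, and $x'_d\triangleq Round_{\nBits}(x_d)$, so that $0\le x_d-x'_d<2^{-\nBits}\ell$ for every $d$. An induction on $N$ over lines~\ref{line:sum}--\ref{line:reduceS} shows that after the $N$-th {\sc Add} we have $\mathfrak r=\sum_{d=1}^{N}x'_d-\sensitivity\sum_{c=1}^{K}\newInputLetter_c\ge0$, where $\newInputLetter_1,\dots,\newInputLetter_K$ are the values handed to $\ranker$, and that telescoping the reduction on line~\ref{line:reduceS} gives $\sum_{c\le b}\newInputLetter_c=\floor{\sensitivity^{-1}\sum_{d\le b\nu}x'_d}$, hence $\newInputLetter_b=\floor{\sensitivity^{-1}\sum_{d\le b\nu}x'_d}-\sum_{c<b}\newInputLetter_c$. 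This is precisely the recurrence for \newInput{} from Section~\ref{sec:lnr}, now applied to the rounded stream with error parameter $\sensitivity$ and block size $\nu$ (with the $\newInputLetter_b$ bounded, as in Section~\ref{sec:lnr}, so that $\ranker$ can ingest them). By Theorem~\ref{thm:smallRCorrectness}, $\ranker.{\sc Query}(j)=\sum_{c=K-j+1}^{K}\newInputLetter_c$ exactly. Thus the state $(\mathfrak r,\newInputLetter_\bullet,\ranker)$ after the stream is a snapshot of the static construction on the rounded last-$\le n$ window, with the offset $o$ and the indices $\ceil{(i-o)/\nu}$, $\ceil{(i-o)/\nu}-1$ in {\sc Query} playing the realigning role of the $\floor{(n-i)/\nu}$, $\ceil{(n-i)/\nu}$ terms in the static query (compare footnote~\ref{note1}).

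Given the invariant, ${\sc Query}(i)$ is, up to this relabelling of block indices, the static ${\sc Query}(i)$ evaluated on the rounded window with parameter $\sensitivity$, so I would run the case analysis of Lemma~\ref{lem:lnrCorrectness} essentially verbatim --- substituting $x_d\mapsto x'_d$, $\lnrErrSymbol\mapsto\sensitivity$, and using only $0\le x'_d\le\ell$ --- to obtain $S'_i-\sensitivity<{\sc Query}(i)\le S'_i$, where $S'_i\triangleq\sum_{d=N-i+1}^{N}x'_d$ (the upper bound needing the small extra argument discussed below). Finally, $\nBits=\ceil{\logp{n/\mu}+\log\log n}$ gives $2^{-\nBits}\ell\le\lnrErrSymbol/(n\log n)$, hence $0\le S_i-S'_i<i\cdot2^{-\nBits}\ell\le\lnrErrSymbol/\log n$; since moreover $\sensitivity=\floor{\lnrErrSymbol(1-1/\log n)}\le\lnrErrSymbol-\lnrErrSymbol/\log n$, combining the two bounds yields $S_i-\lnrErrSymbol<{\sc Query}(i)\le S_i$, as required.

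The step I expect to be the main obstacle is that rounding destroys integrality, and that the sliding offset needs separate care. In Lemma~\ref{lem:lnrCorrectness} the upper bound is sharpened from ``${\sc Query}(i)-\sum x_d<1/2$'' to ``$\le 0$'' using that both sides are (half-)integers; after rounding, $x'_d$, and hence $\mathfrak r$ and ${\sc Query}(i)$, are not integers, so that step must be replaced by deriving ${\sc Query}(i)\le S'_i$ directly from the block-boundary relation (the analog of~\eqref{eq:bitVal}), which is exactly why the offset is calibrated as $-(\sensitivity-1/2)$. Separately, queries with $i\le o$, where the whole window sits inside the current, not-yet-flushed block and only $\mathfrak r$ is available, fall outside the reduction and must be verified on their own, using that such a window has true sum below $\lnrErrSymbol$. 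Once these are settled, the residual index bookkeeping matching the streaming query formula to the static one is routine.
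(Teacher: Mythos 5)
Your proposal is correct and matches the paper's own argument in all essentials: the paper likewise works from the remainder invariant $0\le\mathfrak{r}_0\le\sensitivity-1$, a rounding error bounded by $\lnrErrSymbol/\log n$ that is absorbed by the reduced sensitivity $\sensitivity$ and block size $\nu$, a case analysis on the value ($0$ or $1$) of the block straddling the window boundary with the $-\parentheses{\sensitivity-1/2}$ offset doing exactly the calibration you describe, and a separate short treatment of the $i\le o$ case. The only difference is presentational: the paper re-derives the case analysis of Lemma~\ref{lem:lnrCorrectness} inline for the rounded stream rather than invoking the static lemma as a black box, which, as you yourself observe, cannot be reused entirely verbatim once integrality of the summands is lost.
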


\section{Discussion}
\label{sec:discussion}

In this paper, we studied the properties of data structures that support \emph{approximate} rank queries for multi sets in which each element in $\orange{n}$ appears at most $\ell$ times.
We showed a lower bound for the problem and succinct constructions that require $\sFactor$ times as much memory. We then extended our approach and provided algorithms that process data streams and handle sliding window sum queries. Unlike previous work, we do not assume that the window size is fixed but rather get it at the query time. Interestingly, we show that this is doable in constant time and an additional $\sFactor$ space factor.

In the future, we would like to study structures that allow approximate select queries in $O(1)$ time. This will allow efficient approximate-percentile computation for multi sets. We note that this is already achievable with our data structure in $O(\log n)$ time using a binary search over the rank queries. We also plan to explore the possibility of creating approximate rankers with a \emph{multiplicative} error rather than additive. Finally, we wish to extend our approach to problems other than summing; e.g., computing heavy hitters for a sliding window whose size is given at the query time.
%


\newpage
{
	\tiny
\bibliographystyle{plain}
\bibliography{references}
}
\newpage
\appendix
\section{Proof of Lemma~\ref{lem:remVal}}\label{apx:remVal}
\begin{proof}
\begin{align*}
\mathfrak r &=
{{\sum_{d=1}^{n}x_d}}-\lnrErrSymbol\cdot\sum_{\jmath=1}^{s} \newInputLetter_\jmath
={{\sum_{d=1}^{n}x_d}}-\lnrErrSymbol\cdotpa{\floor{\lnrErrSymbol^{-1}\cdot{\sum_{d=n-\nu\cdot s+1}^{n}x_d}}-\sum_{\jmath=1}^{s-1} \newInputLetter_\jmath}-\lnrErrSymbol\cdot\sum_{\jmath=1}^{s-1} \newInputLetter_\jmath\\
&=\sum_{d=1}^{n}x_d-\lnrErrSymbol\cdot{\floor{\lnrErrSymbol^{-1}\cdot{\sum_{d=n-\nu\cdot \UBnBlocks+1}^{n}x_d}}}\le \lnrErrSymbol-1 + \sum_{d=1}^{n-\nu\cdot \UBnBlocks}x_d\\
&\le \lnrErrSymbol-1 + \sum_{d=1}^{\nu-1}x_d \le \lnrErrSymbol-1 + (\nu-1)\ell.
\end{align*}
If $\mu\le 1$, then $\nu=1$ and thus $\mathfrak r\le \lnrErrSymbol-1$. Otherwise, we have $\mathfrak r\le \lnrErrSymbol-1 + (\mu-1)\ell \le 2\lnrErrSymbol-\ell-1$.\qedhere
\end{proof}
\section{Proof of Theorem~\ref{thm:smallRCorrectness}} \label{apx:smallRSliderProof}
We start with analyzing the memory requirements of our algorithm.
\begin{lemma}\label{lem:slidingMem}
	Algorithm~\ref{alg:qsr} uses $\mathcal B(1+o(1))$ memory bits.
\end{lemma}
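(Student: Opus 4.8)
The statement to prove is Lemma~\ref{lem:slidingMem}: Algorithm~\ref{alg:qsr} uses $\mathcal B(1+o(1))$ memory bits, where $\mathcal B = n\logp{\ell+1}$. The plan is to account for each of the five variables of the algorithm ($C$, $SC$, $\total$, $ind$, $T$, $\mathcal W$) separately, show that $\mathcal W$ accounts for the leading $\mathcal B(1+o(1))$ term, and that everything else is $o(\mathcal B)$. This mirrors exactly the space analysis of the static \lnr[\minDelta] in Theorem~\ref{thm:exactRanker}, since the cyclic buffers $C$ and $SC$ hold precisely the chunk / sub-chunk aggregates of one frame.

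\textbf{Key steps, in order.} First I would bound the window $\mathcal W$: it holds $n$ integers in $\frange{\ell}$, so it needs $n\ceil{\logp{\ell+1}} \le n\logp{\ell+1} + n = \mathcal B + o(\mathcal B)$ bits, using $n = o(\mathcal B)$, which holds since $\mathcal B = n\logp{\ell+1} = \omega(n)$ as $\ell \ge 1$ — actually one must be slightly careful: if $\ell = 1$ then $\logp{\ell+1}=1$ and $n = \Theta(\mathcal B)$, so the $+n$ rounding term is not negligible. Here I would invoke the standing assumption from the excerpt that $n = \omega(1)$ and note that in the $\ell=1$ case we should store $\mathcal W$ as a packed bit-array of exactly $n$ bits (no per-element rounding), giving $\mathcal B = n$ exactly; more generally pack $\mathcal W$ into $\ceil{n\logp{\ell+1}} = \mathcal B(1+o(1))$ bits using the word-RAM model. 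Second, the chunk buffer $C$: it has $n/(\log n)^2$ entries, each $\ceil{\logp{\ell n+1}}$ bits; reusing the computation from the proof of Theorem~\ref{thm:exactRanker}, $\frac{n}{\log^2 n}\logp{\ell n + 1} \le n\logp{\ell+1}\cdotpa{\frac{1}{\log^2 n} + \frac{1}{\log\ell \log n}} = o(\mathcal B)$. Third, the sub-chunk buffer $SC$: it has $n/\sqrt{\log n}$ entries, each $\ceil{\logp{\ell\log^2 n + 1}}$ bits, and again by the Theorem~\ref{thm:exactRanker} computation $\frac{n}{\sqrt{\log n}}\logp{\ell\log^2 n+1} \le n\logp{\ell+1}\cdotpa{\frac{1}{\sqrt{\log n}} + \frac{2\log\log n}{\log\ell\sqrt{\log n}}} = o(\mathcal B)$. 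Fourth, the lookup table $T$: since we are in the case $\ell+1\le 2^{\sqrt[3]{\log n}}$, its size is $(\ell+1)^{\sqrt{\log n}}\sqrt{\log n}\logp{\ell\sqrt{\log n}+1} \le 2^{\log^{5/6} n + \log\log n}\logp{(\ell+1)\sqrt{\log n}} = o(\mathcal B)$, exactly as in Theorem~\ref{thm:exactRanker}. Fifth, $\total$ is a single integer bounded by $n\ell$ (reset every frame), costing $O(\log(\ell n)) = O(\log n + \log\ell) = o(\mathcal B)$ bits, and $ind < n$ costs $O(\log n) = o(\mathcal B)$ bits. Summing, the total is $\mathcal B(1+o(1)) + o(\mathcal B) = \mathcal B(1+o(1))$.

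\textbf{Main obstacle.} The only genuinely delicate point is the interaction between the $\logp{\ell+1}$ lower-bound term and the additive $+1$'s coming from the ceilings $\ceil{\log(\cdot)}$ when $\ell$ is a small constant (notably $\ell=1$): there, $n = \Theta(\mathcal B)$, so a crude $n\ceil{\logp{\ell+1}}$ bound for $\mathcal W$ loses a constant factor rather than a $(1+o(1))$ factor. The fix is to store $\mathcal W$ (and, if needed, $C$, $SC$) in a tightly packed bit-array occupying $\ceil{n\logp{\ell+1}}$ bits total, addressing individual elements via the word-RAM's $\Thetap{\log n + \log\ell}$-bit words; this recovers $\mathcal W = \mathcal B(1+o(1))$ in all regimes. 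Once that packing is in place, all remaining terms are $o(\mathcal B)$ by the same estimates already carried out for the static ranker, and the lemma follows.
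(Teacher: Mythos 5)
Your proposal is correct and follows essentially the same route as the paper's own proof: a variable-by-variable accounting in which $\mathcal W$ contributes the dominant $\mathcal B(1+o(1))$ term and $C$, $SC$, $T$, $\total$, $ind$ are shown to be $o(\mathcal B)$ by reusing the estimates from Theorem~\ref{thm:exactRanker}. If anything, you are slightly more thorough than the paper, which omits the lookup table $T$ from its tally and implicitly assumes the packed encoding of $\mathcal W$ that you spell out to handle small constant $\ell$.
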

\begin{proof}
	We have $n / (\log n)^2$ chunks, each represented using $O(\log (\ell n))$ bits. Similarly, each of the $n/\sqrt{\log n}$ sub-chunk aggregates requires $O(\logp{\ell\log n})$ bits as its value is bounded by $\logp{\ell\log^2 n}$. Our window, $\mathcal W$ uses $n\logp{ \ell + 1}$ bits, while the \total{} and $ind$ variables require $O(\log n)$ bits. Thus, the overall space consumption is $\mathcal B(1+o(1))$.\qedhere
\end{proof}
We are now ready to prove the theorem.
\begin{proof}
Denote the stream by $x_1,x_2,\ldots,x_{\lastIDX}$, such that the most recent element's index is \lastIDX{}, where $m\in[n-1]$ is the offset within the current frame and $k$ frames were completed so far. 
We assume that $k\ge 1$. The case for $k=0$ follows from similar arguments.
We start with a few straight forward observations. Notice that $C[0]$ always contains the sum of the last frame that was completed; that is, $C[0]= \sum_{d=(k-1)n+1}^{k\cdot n}x_d$. Next, for any positive $j\in\brackets{n / (\log n)^2-1}$, we have that $C[j]$ contains the sum of the last $j$-indexed chunk that was completed, i.e., 
$$C[j] = \begin{cases}
\sum_{d=k\cdot n+(j-1)\cdot(\log n)^2+1}^{k\cdot n+j\cdot(\log n)^2} x_d &\mbox{if } m\ge j\cdot(\log n)^2\\
\sum_{d=(k-1)\cdot n+(j-1)\cdot(\log n)^2+1}^{(k-1)\cdot n+j\cdot(\log n)^2} x_d &\mbox{otherwise }
\end{cases}.$$
Similarly, we have that $\forall \jmath\in[2n/\log n]$:
$$
SC[\jmath] = \begin{cases}
\sum_{d=k\cdot n+(\jmath-1)\cdot\sqrt{\log n}+1}^{k\cdot n+\jmath\cdot\sqrt{\log n}} x_d &\mbox{if } m\ge \jmath\cdot\sqrt{\log n}\\
\sum_{d=(k-1)\cdot n+(\jmath-1)\cdot\sqrt{\log n}+1}^{(k-1)\cdot n+\jmath\cdot\sqrt{\log n}} x_d &\mbox{otherwise }
\end{cases}.
$$
Given a query for $i\le n$, the goal of an \qsr{} is to return the quantity $S\triangleq \sum_{d=\lastIDX-i+1}^{\lastIDX}x_d$.
First, we express the sum of elements from the beginning of the \emph{previous} frame, $S_P$, as:
$$
S_P \triangleq \sum_{d=(k-1)n+1}^{\lastIDX}x_d = \sum_{d=(k-1)n+1}^{kn}x_d + \sum_{d=kn+1}^{\floor{(\lastIDX) / (\log n)^2}}x_d + \sum_{d=\floor{(\lastIDX) / (\log n)^2}+1}^{\floor{(\lastIDX) / \sqrt{\log n}}}x_d + \sum_{d=\floor{(\lastIDX) / \sqrt{\log n}}+1}^{\lastIDX}x_d.$$
Next, since $ind = (\lastIDX\mod n)= m$, we have that 
\begin{enumerate}
\item $C[0]= \sum_{d=(k-1)n+1}^{\lastIDX}x_d$.
\item $C\brackets{\floor{ind / (\log n)^2}}= \sum_{d=kn+1}^{\floor{ind / (\log n)^2}}x_d$.
\item $SC\brackets{\floor{ind / \sqrt{\log n}}} = \sum_{d=\floor{ind / (\log n)^2}+1}^{\floor{(\lastIDX) / \sqrt{\log n}}}x_d$.
\item $T\brackets{x_{\floor{ind / \sqrt{\log n}}+1},\ldots,x_{\lastIDX}} = \sum_{d=\floor{(\lastIDX) / \sqrt{\log n}}+1}^{\lastIDX}x_d $.
\end{enumerate}   
Notice that if $i\ge m$, these are the first four summands of Line~\ref{line:qsrQuerySmallL}; if $i<m$, then we do not add $C[0]$ to the sum.
In both cases, we are left with the need to subtract the sum of elements, starting from the beginning of the relevant frame, that are not a part of the last $i$ items.
Similarly to the above, we have that the sum from the beginning of the previous frame to the $i+1$ newest item is:
$\sum_{d=(k-1)n+1}^{\lastIDX-i} x_d$.
If the last $i$ items are all contained in the current frame (i.e., $i<m$), then we have:
$$\sum_{d=(k-1)n+1}^{\lastIDX-i} x_d= \sum_{d=(k-1)n+1}^{kn}x_d + \sum_{d=kn+1}^{\floor{(\lastIDX-i) / (\log n)^2}}x_d + \sum_{d=\floor{(\lastIDX-i) / (\log n)^2}+1}^{\floor{(\lastIDX-i) / \sqrt{\log n}}}x_d + \sum_{d=\floor{(\lastIDX) / \sqrt{\log n}}+1}^{\lastIDX-i}x_d.$$

In this case, we get:
\begin{enumerate}
	\item $C[0]= \sum_{d=(k-1)n+1}^{kn}x_d$.
	\item $C\brackets{\floor{(ind-i) / (\log n)^2}}= \sum_{d=kn+1}^{\floor{\lastIDX / (\log n)^2}}x_d$.\label{itema}
	\item $SC\brackets{\floor{(ind-i) / \sqrt{\log n}}} = \sum_{d=\floor{(\lastIDX-i) / (\log n)^2}+1}^{\floor{(\lastIDX-i) / \sqrt{\log n}}}x_d$.\label{itemb}
	\item $T[x_{\floor{(ind-i) / \sqrt{\log n}}+1},\ldots,x_{ind-i}] = \sum_{d=\floor{(\lastIDX-i) / \sqrt{\log n}}+1}^{\lastIDX-i}x_d $.\label{itemc}
\end{enumerate}   
Here, we cancel the effect of $C[0]$ simply by not adding it as one of the summands (the \emph{If} condition of Line~\ref{line:qsr-query-ifSmallL}). Quantities ~\ref{itema},\ref{itemb} and \ref{itemc} are the three subtrahends of our query procedure.
Finally, if $i\ge m$ we do add the value of $C[0]$, and thus in all cases we successfully compute the sum of the last $i$ elements.
\end{proof}
\section{Proof of Lemma~\ref{lem:slidingQsrSpaceProof}}\label{apx:slidingQsrSpaceProof}
\begin{proof}
The algorithm utilizes three variables: $\ranker$ that requires $\sFactor\cdot s\logp{z+1}$, $\mathfrak{r}$ that uses $O(\mathfrak{b}\log\nu)$ bits, and $o$ is allocated with $\ceil{\log n}$ bits. Overall, the number of bits used by our construction is
\begin{align*}
&\sFactor\cdot s\logp{z+1} + O(\mathfrak{b}\log\nu) + \ceil{\log n} \\
=& \sFactor\cdot \UBnBlocks\logp{\floor{\mu^{-1}\nu+1}+1} + O({\ceil{\logp{n/\mu}+\log\log n}}\log\nu) + O\parentheses{\log n}.
\end{align*}
Since $\nu=\max\set{\floor{\mu\cdot\sNegFactor},1}$, we get the desired bound.
\end{proof}
\section{An \qsr{} for $\ell+1 > 2^{\sqrt[3]{\log n}}$}\label{apx:largeRslidingRanker}
Here, we detail the construction for the case of large $\ell$ value. We do the same splitting into frames, chunks, and sub-chunks as before. However, the large value of $\ell$ does not allow us to succinctly store the lookup table as before. Instead, we keep for each element the sum from the beginning of its sub-chunk, similarly to our solution in Theorem~\ref{thm:exactRanker}.
Our algorithm uses the following variables:
\begin{itemize}
	\item $C$ - a cyclic buffer of ${n / (\log n)^2}$ integers, each allocated with $\ceil{\log (\ell n + 1)}$ bits.
	\item $SC$ - a cyclic buffer of $n/\sqrt{\log n}$ integers, each allocated with
	$\ceil{\log(\ell\log^2 n + 1)}$ bits.
	\item \total{} - the sum of elements  in the current frame.
	\item \subTotal{} - the sum of elements in the current sub-chunk.
	\item $ind$ - the most recent element's index, modulo $n$.
	\item $\mathcal W$ - a cyclic array that contains for each item the sum from the beginning of its sub-chunk.
\end{itemize}
We give a pseudo code of our \qsr{} in Algorithm~\ref{alg:qsrLargeR}.
Next, we analyze the properties of the algorithm.
\begin{algorithm}[t]
	\caption{An \qsr{} for $\ell+1 > 2^{\sqrt[3]{\log n}}$}\label{alg:qsrLargeR}
	\begin{algorithmic}[1]
		\Statex Initialization: ${C}\gets\bar{0},{SC}\gets\bar{0}, ind\gets 0, \total \gets 0, \subTotal\gets 0, \mathcal W\gets\bar{0}$
	\Function{Add}{Element $x$}
		\State $ind\gets ind  + 1 \mod n$
		\State $\subTotal \gets \subTotal + x$
		\State $\mathcal W[ind] \gets \subTotal$ \label{line:wGetsSubTotal}
		\If {$(ind \mod \sqrt{\log n})) = 0$}\Comment{End of a sub-chunk}
			\State $\total \gets \total + \subTotal$			
			\State $SC[ind/\sqrt{\log n}] \gets \subTotal$
			\State $\subTotal \gets 0$			
			\If {$(ind \mod (\log n)^2) = 0$}\Comment{End of a chunk}
				\State $C[ind / (\log n)^2] \gets \total$
				\If {$(ind \mod n) = 0$} \Comment{End of a frame, reset counter}
					\State $\total \gets 0$
				\EndIf
			\EndIf
		\EndIf
	\EndFunction
		
		\Function{Query}{$i$} \Comment{For $i\le n$}
		\State $addition \gets 0$
		\If {$i \ge ind $} \Comment{If not contained in current frame}\label{line:qsr-query-if}
		\State $addition \gets C[0]$
		\EndIf
		\State \Return \Comment{See Figure~\ref{fig:slidingRankerQuery}}\label{line:qsrQuery}\vspace*{-0.2cm}
		\begin{align*}
		\subTotal& + SC\bigg[\Big\lfloor ind / \sqrt{\log n}\Big\rfloor\bigg] + C\bigg[{\Big\lfloor{ind / (\log n)^2}\Big\rfloor}\bigg] + addition\\
		- C\bigg[&\Big\lfloor{\parentheses{(ind-i)\mod n} / (\log n)^2}\Big\rfloor\bigg] - SC\bigg[{\floor{\parentheses{(ind-i)\mod n} / \sqrt{\log n}}}\bigg] - \mathcal{W}\brackets{ind - i}
		\end{align*}		\vspace*{-0.4cm}
		\EndFunction
		\end{algorithmic}
\end{algorithm}

\begin{theorem}
	Algorithm~\ref{alg:qsrLargeR} uses $\lnrLBSymbol(1+o(1))$ memory bits for $\ell+1 > 2^{\sqrt[3]{\log n}}$.
\end{theorem}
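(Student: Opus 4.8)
The statement concerns only memory, so the plan is to tally the bits used by each variable of Algorithm~\ref{alg:qsrLargeR} and show that the cyclic array $\mathcal W$ carries essentially all of them while every other structure is asymptotically negligible. Recall $\mathcal B = n\logp{\ell+1}$; since we are in the exact regime ($\lnrErrSymbol=\minDelta$, hence $\mu=1/\ell\le 1$), one checks straight from the definition that $\lnrLBSymbol=\floor{n/1}\log(\ell+1)=\mathcal B$, so it suffices to prove the algorithm uses $\mathcal B(1+o(1))$ bits. The single hypothesis doing all the work is $\ell+1>2^{\sqrt[3]{\log n}}$, equivalently $\logp{\ell+1}>\sqrt[3]{\log n}$; I would invoke it repeatedly to absorb additive $\log\log n$- and $\log n$-type terms into the $(1+o(1))$ factor, exactly as in the large-$\ell$ branch of Theorem~\ref{thm:exactRanker}.

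First I would dispatch the auxiliary aggregates. The buffer $C$ has $n/(\log n)^2$ cells of $\Op{\logp{\ell n}}$ bits, so it uses $\Op{n\log\ell/(\log n)^2 + n/\log n}$ bits; the first summand is $o(\mathcal B)$ since $1/(\log n)^2=o(1)$, and the second since $n/\log n = \mathcal B/\parentheses{\logp{\ell+1}\log n}$ with $\logp{\ell+1}\log n > (\log n)^{4/3}\to\infty$. The buffer $SC$ has $n/\sqrt{\log n}$ cells of $\Op{\logp{\ell\log^2 n}}$ bits, giving $\Op{n\log\ell/\sqrt{\log n} + n\log\log n/\sqrt{\log n}}$, both $o(\mathcal B)$ — the first because $1/\sqrt{\log n}=o(1)$, the second because $\logp{\ell+1}\sqrt{\log n} > (\log n)^{5/6}=\omegap{\log\log n}$. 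The scalars \total{}, \subTotal{} and $ind$ are each $\Op{\logp{\ell n}}=o(\mathcal B)$, using $\logp{\ell+1}=\omega(1)$ and $n=\omega(1)$. All of this is routine.

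The substantive step is the bound on $\mathcal W$. By Line~\ref{line:wGetsSubTotal}, for each of the $n$ positions $\mathcal W$ stores the running sum since the start of that position's sub-chunk; as sub-chunks have length $\sqrt{\log n}$, each stored value lies in $\set{0,\ldots,\ell\sqrt{\log n}}$ and thus fits in $\ceil{\logp{\ell\sqrt{\log n}+1}}$ bits, so $\mathcal W$ occupies $n\ceil{\logp{\ell\sqrt{\log n}+1}}$ bits. I would bound $\ceil{\logp{\ell\sqrt{\log n}+1}}\le \logp{\ell+1}+\frac{1}{2}\log\log n+1$, factor out $\logp{\ell+1}$, and use $\logp{\ell+1}>\sqrt[3]{\log n}$ to conclude that both $\tfrac{1}{2}\log\log n/\logp{\ell+1}$ and $1/\logp{\ell+1}$ are $o(1)$, whence $\mathcal W$ uses $n\logp{\ell+1}(1+o(1))=\mathcal B(1+o(1))$ bits. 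This is the only point requiring any care: one must keep the leading constant equal to $1$ rather than $2$, which works precisely because, in this $\ell$-regime, the per-value overhead $\tfrac12\log\log n$ is a lower-order term relative to the $\logp{\ell+1}$ bits already needed just to record an element. Summing all contributions then yields the claimed $\lnrLBSymbol(1+o(1))=\mathcal B(1+o(1))$ bound.

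For completeness I would close with one sentence on correctness, namely that Algorithm~\ref{alg:qsrLargeR} is an \qsr{}: this follows from the proof of Theorem~\ref{thm:smallRCorrectness} essentially verbatim, since the frame/chunk/sub-chunk telescoping identity is unchanged; the sole difference is that the two within-sub-chunk partial sums formerly produced by the lookup table $T$ are now obtained directly — one from \subTotal{} for the most recent, possibly incomplete, sub-chunk, and one from $\mathcal W[ind-i]$, which by Line~\ref{line:wGetsSubTotal} is exactly the sum from the start of its sub-chunk up to position $ind-i$.
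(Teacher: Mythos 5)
Your proposal is correct and follows essentially the same route as the paper's proof: the chunk/sub-chunk aggregates and scalars are dismissed as $o(\lnrLBSymbol)$ exactly as in Lemma~\ref{lem:slidingMem}, and the dominant cost is the $n$-entry array of within-sub-chunk cumulative sums, each in $\frange{\ell\sqrt{\log n}}$, bounded by $n\logp{\ell+1}\parentheses{1+\frac{\log\log n}{\logp{\ell+1}}}=\sFactor\lnrLBSymbol$ via $\logp{\ell+1}>\sqrt[3]{\log n}$. Your explicit check that $\lnrLBSymbol=n\logp{\ell+1}$ in the $\lnrErrSymbol=\minDelta$ case and the closing remark on correctness are fine additions but not needed for this statement.
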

\begin{proof}
	Similarly to the analysis in Lemma~\ref{lem:slidingMem}, the algorithm uses $o(\lnrLBSymbol)$ bits for keeping the chunk and sub-chunk aggregates. Here, we replaced the lookup table \emph{and} the array of window elements by an array that stores the within-sub-chunk cumulative sum for each element. That is, each entry in the $n$-sized array stores a number in $\frange{\ell\cdot\sqrt{\log n}}$ and thus the array requires $n\cdot\logp{\ell\cdot\sqrt{\log n}+1}\le 
	n\log(\ell+1)\parentheses{1+\frac{\log{\log n}}{\logp{\ell+1}}}=\sFactor\lnrLBSymbol$ 
	bits~overall.\qedhere
\end{proof}
\begin{theorem}
	Algorithm~\ref{alg:qsrLargeR} is an \qsr{}.
\end{theorem}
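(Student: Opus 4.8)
The plan is to verify that Algorithm~\ref{alg:qsrLargeR} correctly answers sliding window sum queries, mirroring the structure of the proof of Theorem~\ref{thm:smallRCorrectness} in Appendix~\ref{apx:smallRSliderProof} but with the within-sub-chunk lookup table replaced by the cumulative array $\mathcal W$. First I would fix notation exactly as in that proof: write the stream as $x_1,\ldots,x_{\lastIDX}$ where $m\in[n-1]$ is the offset within the current frame and $k\ge 1$ frames have been completed (the $k=0$ case being analogous). The key invariants to establish by induction on the number of processed elements are: (i) $C[0]$ holds the sum of the most recently completed frame, and for positive $j$, $C[j]$ holds the sum of the most recently completed $j$-indexed chunk; (ii) $SC[\jmath]$ holds the sum of the most recently completed $\jmath$-indexed sub-chunk; (iii) $\subTotal$ equals $\sum_{d=\floor{\lastIDX/\sqrt{\log n}}+1}^{\lastIDX} x_d$, the partial sum of the sub-chunk currently being filled; and, crucially, (iv) for every index $d$ within a completed or partially-completed sub-chunk, $\mathcal W[d\bmod n] = \sum_{d'=\sqrt{\log n}\cdot\floor{(d-1)/\sqrt{\log n}}+1}^{d} x_{d'}$, i.e.\ $\mathcal W$ stores the running sum from the start of $d$'s sub-chunk up to and including position $d$. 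Invariant (iv) is immediate from Line~\ref{line:wGetsSubTotal}, since $\subTotal$ is reset to $0$ at each sub-chunk boundary and incremented by $x$ before the write.

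With these invariants in hand, the correctness of \textsc{Query}$(i)$ follows the same telescoping argument as in the small-$\ell$ proof. The target is $S\triangleq\sum_{d=\lastIDX-i+1}^{\lastIDX}x_d$. I would decompose the prefix sum from the start of the relevant frame up to the newest element as $\subTotal + SC[\floor{ind/\sqrt{\log n}}] + C[\floor{ind/(\log n)^2}] + addition$, where $addition$ is $C[0]$ exactly when $i\ge m$ (i.e.\ the window reaches into the previous frame), matching the \emph{If} of Line~\ref{line:qsr-query-if}. This is where $\subTotal$ plays the role that the lookup table $T[\mathcal W[ind-(ind\bmod\sqrt{\log n})+1,\ldots,ind]]$ played in Algorithm~\ref{alg:qsr}: by invariant (iii) it is precisely the sum of the current (incomplete) sub-chunk. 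Then I would subtract the sum of elements that fall outside the last $i$ items: $C[\floor{((ind-i)\bmod n)/(\log n)^2}] + SC[\floor{((ind-i)\bmod n)/\sqrt{\log n}}] + \mathcal W[ind-i]$. Here $\mathcal W[ind-i]$ replaces the table lookup $T[\mathcal W[\ldots+1,\ldots,ind-i]]$: by invariant (iv), $\mathcal W[ind-i]$ equals the sum from the beginning of the sub-chunk containing position $\lastIDX-i$ up through that position, which is exactly the ``leftover'' tail of that sub-chunk that must be removed.

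The remaining work is bookkeeping: confirm that the two frame-relative decompositions (one for the $i$-newest endpoint, one for the $(i{+}1)$-oldest endpoint) are consistent — that they reference the same frame start when $i\ge m$ and adjacent frame starts when $i<m$, so that the $C[0]$ terms either both appear and cancel or neither appears — exactly as worked out in Appendix~\ref{apx:smallRSliderProof}. I do not expect genuine obstacles here; the only subtlety, and the step I would be most careful about, is the modular index arithmetic in the cyclic buffers $C$, $SC$, and $\mathcal W$ (the $(ind-i)\bmod n$ expressions), ensuring that when the window straddles a frame boundary the indices $\floor{((ind-i)\bmod n)/(\log n)^2}$ etc.\ point to entries that have \emph{not} yet been overwritten by the current frame — which holds precisely because $i\le n$ guarantees the $(i{+}1)$-oldest element lies within the previous frame or the current one, and the buffers have exactly enough slots to retain a full frame's worth of aggregates. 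Once the index validity is checked, substituting the four invariants into the return expression of Line~\ref{line:qsrQuery} and cancelling the overlapping prefix yields $S$, completing the proof.
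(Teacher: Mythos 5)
Your proposal is correct and matches the paper's argument: the paper likewise reduces correctness to Theorem~\ref{thm:smallRCorrectness} by observing that the query of Algorithm~\ref{alg:qsrLargeR} is identical to that of Algorithm~\ref{alg:qsr} except that $\subTotal$ supplies the current sub-chunk's partial sum and $\mathcal{W}[ind-i]$ supplies the within-sub-chunk cumulative sum at the window boundary (maintained via Line~\ref{line:wGetsSubTotal}), exactly the two substitutions you identify. The paper states this equivalence directly and cites Theorem~\ref{thm:smallRCorrectness}, whereas you inline and re-derive its telescoping argument with explicit invariants; the content is the same.
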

\begin{proof}
Observe that the query procedure of our algorithm is equivalent to that of Algorithm~\ref{alg:qsr}, except for the part where it uses the lookup table.
We now use the $\subTotal$ variable to compute the sum of the current sub-chunk instead of looking it up in the table. The sum of elements that preceded the last $i$ items in $(ind - i)$'s sub-chunk is then retrieved from $\mathcal{W}\brackets{ind - i}$. As we simply track the sum of items prior to that index in $\subTotal$ and then store it in $\mathcal W$ (see line~\ref{line:wGetsSubTotal}), we get its value immediately. Thus, our estimation procedure is equivalent to that of Algorithm~\ref{alg:qsr} and using Theorem~\ref{thm:smallRCorrectness} we establish our correctness.
\end{proof}
\newcommand{\lastT}{h}
\section{Proof of Theorem~\ref{thm:slidingCorrecntess}}\label{apx:slidingCorrectness}
\begin{proof}
For the proof, we define a few quantities that we also use in our query procedure $\numBits \triangleq \ceil{\frac{i-o}{\nu}}, \setBits \triangleq \ranker.\mbox{\sc Query}\parentheses{\numBits}, \lastBit \triangleq \setBits - \ranker.\mbox{\sc Query}\parentheses{\numBits-1}$ and $\outBits \triangleq \outBitsVal$ as in our {\sc Query} function; see Figure~\ref{fig:slidingRankerQueryProof} for illustration.
We assume that the index of the most recent element is 
$$\lastT\triangleq \blockOffset+\outBits,$$ such that $o\in[\nu-1]$ is the offset within the current block and that $x_1$ is the first element in the newest block of $\lastBit$.
By the correctness of the $\ranker$ Sliding Ranker, and as illustrated in Figure~\ref{fig:slidingRankerQueryProof}, we have that \setBits{} is the sum of the last \numBits{} added to $\ranker$, that \lastBit{} is the value of the element that represents the last block that overlaps with the queried window. Also notice that \outBits{} is the number of elements in that block that are not a part of the  window.
\begin{figure}[]
	\centering
	\includegraphics[width=\linewidth]{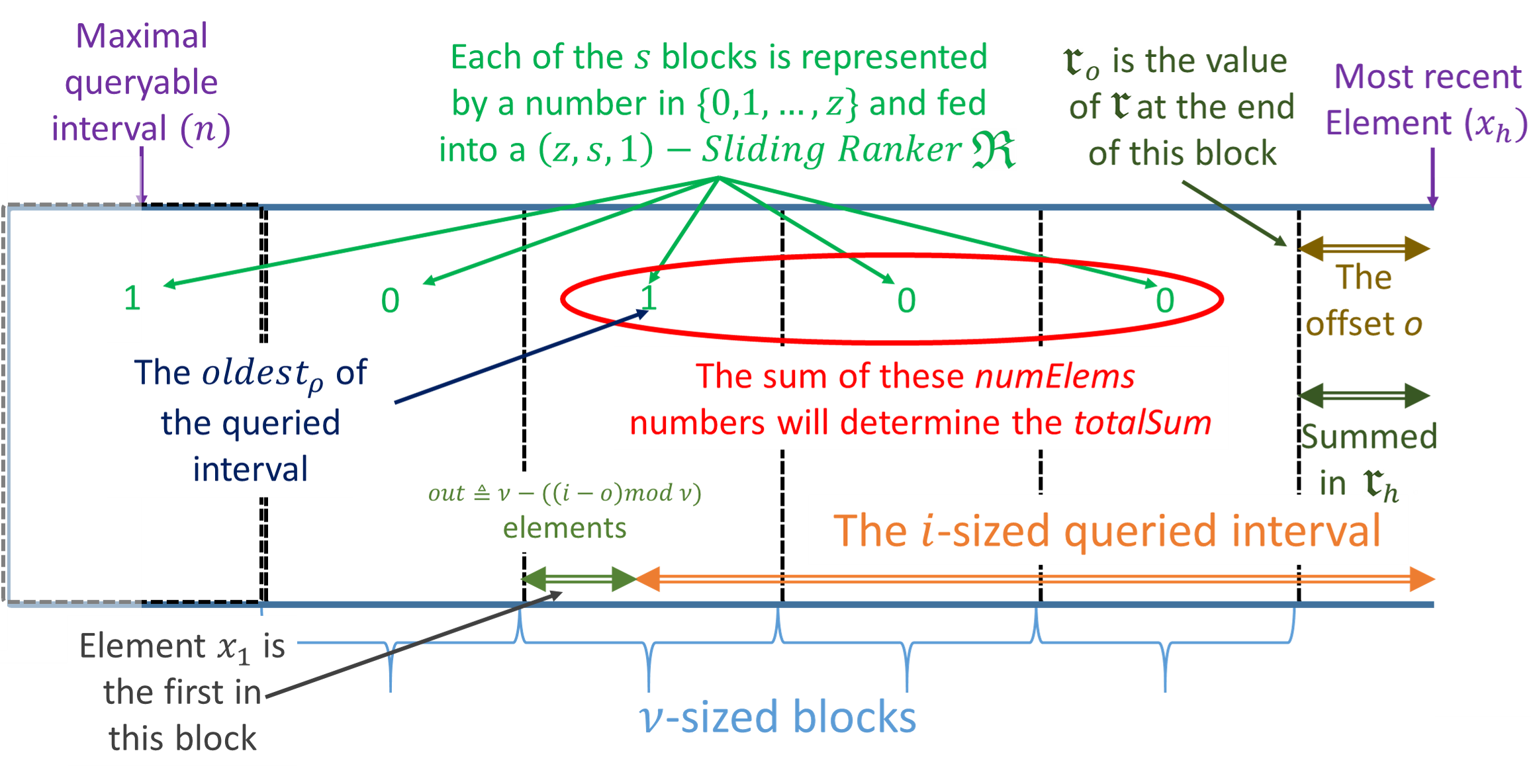}
	\caption{Theorem~\ref{thm:slidingCorrecntess} proof's setting, with all relevant quantities that Algorithm~\ref{alg:approxSliding} uses illustrated.}
	\label{fig:slidingRankerQueryProof}
\end{figure}
For any $t\in \mathbb N$, we denote by $\mathfrak{r_t}$ the value of $\mathfrak{r}$ \emph{after} the $t^{th}$ item was added; e.g., $\mathfrak{r_{\lastT}}$ is the value of $\mathfrak{r}$ at the time of the query and $\mathfrak{r_0}$ is its value after the last block has ended. For other variables, we consider their value at the query time.

When a block ends, we effectively perform $\mathfrak{r}\gets \mathfrak{r}\mod \sensitivity$ (lines~\ref{line:setBit} and \ref{line:reduceS}) and thus:
\begin{align}
0 \le \mathfrak{r_0} \le \sensitivity - 1.\label{eq:s0}
\end{align}
Our goal is to estimate the quantity 
\begin{align}
S_i \triangleq \sum_{d=\lastT-i+1}^{\lastT} x_d = \sum_{d=\outBits+1}^{\lastT} x_d.\label{eq0}
\end{align}
Recall that our estimation (Line~\ref{line:est}) is:
\begin{multline}
\widehat{S_i}\triangleq\mathfrak{r_{\lastT}} - \parentheses{\sensitivity - 1/2}+\sensitivity\cdot\setBits -\ell\cdot\lastBit\cdot\outBits\\
= \mathfrak{r_{\outBits}} + \sum_{d=\outBits+1}^{\lastT} Round_{\nBits}(x_d) - \parentheses{\sensitivity - 1/2}+\sensitivity\cdot\setBits -\ell\cdot\lastBit\cdot\outBits,
\label{eq1}
\end{multline}
where the last equality follows from the fact that within a block we simply sum the rounded values (Line~\ref{line:sum}).
Next, observe that we sum the rounded values in each block and that if $\mathfrak{r}$ is decreased by $k\cdot \sensitivity$ (for some $k\in\mathbb N$) in Line~\ref{line:reduceS}, then we set one of the last $\numBits$ elements added to $\ranker$ to $k$. This means that:
\begin{align}
\mathfrak{r_0} + \sum_{d=1}^{\outBits} Round_{\nBits}(x_d) 
= \mathfrak{r_{\outBits}} + \sensitivity\cdot\ranker.\mbox{\sc Query}\parentheses{\numBits} = \mathfrak{r_{\outBits}} + \sensitivity\cdot\setBits.\label{eq2}
\end{align}
Plugging~\eqref{eq2} into~\eqref{eq1} gives us
\begin{multline}
\widehat{S_i}= \mathfrak{r_0} + \sum_{d=1}^{\outBits} Round_{\nBits}(x_d)  + \sum_{d=\outBits+1}^{\lastT} Round_{\nBits}(x_d) - \parentheses{\sensitivity - 1/2} -\ell\cdot\lastBit\cdot\outBits.
\label{eq3}
\end{multline}
Joining~\eqref{eq3} with \eqref{eq0}, we can express the algorithm's error as:
\begin{multline}
\widehat{S_i} - S_i = \mathfrak{r_0} + \sum_{d=1}^{\outBits} Round_{\nBits}(x_d)  + \sum_{d=\outBits+1}^{\lastT} \biggParentheses{Round_{\nBits}(x_d) - x_d} - \parentheses{\sensitivity - 1/2} -\ell\cdot\lastBit\cdot\outBits\\
= \mathfrak{r_0} + \sum_{d=1}^{\outBits} Round_{\nBits}(x_d)  + \xi - \parentheses{\sensitivity - 1/2} -\ell\cdot\lastBit\cdot\outBits
,\label{eq4}
\end{multline}
where $\xi$ is the rounding error which is defined as
\begin{align*}
\xi \triangleq \sum_{d=\outBits+1}^{\lastT} \biggParentheses{Round_{\nBits}(x_d) - x_d}.
\end{align*}
Since each rounding of an integer $x\in\frange{\ell}$ has an error of at most $\frac{\ell}{2^{\nBits}}$, and as
we round $i\le n$ elements, we have that the rounding error satisfies
\begin{align}
0 \ge \xi \ge 0 -\frac{\ell\cdot n}{2^{\nBits}} \ge -\lnrErrSymbol/\log n
,\label{eq5}
\end{align}
where the last inequality is immediate from our choice of the number of bits that is $\nBits\triangleq\ceil{\logp{n/\mu}+\log\log n}$.
We now split to cases based on the value of $\mu$. As in the \lnr{} case, we start with the simpler $\mu< 2\cdotpa{1-1/\log n}$ case, in which $\nu=1$ (and consequently, $out\equiv 0$). 
This allows us to write the algorithm's error of~\eqref{eq4} as
\begin{align}
\widehat{S_i} - S_i = \mathfrak{r_0}  + \xi - \parentheses{\sensitivity - 1/2}.
\end{align}
We now use \eqref{eq:s0},\eqref{eq5} and the definition of \sensitivity{} to obtain:
\begin{align*}
\widehat{S_i} - S_i = \mathfrak{r_0}  + \xi - \parentheses{\sensitivity - 1/2} \le -1/2.
\end{align*}
Similarly, we can bound it from below:
\begin{align*}
\widehat{S_i} - S_i = \mathfrak{r_0}  + \xi - \parentheses{\sensitivity - 1/2} \ge \xi - \parentheses{\sensitivity - 1/2} \ge -\lnrErrSymbol + 1/2.
\end{align*}
We established that if $\nu=1$ we obtain the desired approximation.
Henceforth, we focus on the case where $\mu\ge 2\cdotpa{1-1/\log n}$, and thus $\nu=\floor{\mu\cdotpa{1-1/\log n}}$ and $\lastBit\in\set{0,1}$.
We now consider two cases, based on the value of \lastBit.
\begin{enumerate}
	\item {\textbf{$\bm{\lastBit=1}$ case.}\\}
	In this case, we know that after the processing of element $x_\nu$ the value of $\mathfrak{r}$ was at least $\sensitivity$ (Line~\ref{line:setBit}). This implies that $\mathfrak{r_0} + \sum_{d=1}^{\nu} Round_{\nBits}(x_d) \ge \sensitivity$ and equivalently
	\begin{align*}
	\mathfrak{r_0} + \sum_{d=1}^{\outBits} Round_{\nBits}(x_d) \ge \sensitivity - \sum_{d=\outBits+1}^{\nu} Round_{\nBits}(x_d).
	\end{align*}
	Substituting this in~\eqref{eq4}, and applying~\eqref{eq5}, we get that:
	\begin{align*}
	\widehat{S_i} - S_i &= \mathfrak{r_0} + \sum_{d=1}^{\outBits} Round_{\nBits}(x_d)  + \xi - \parentheses{\sensitivity - 1/2} -\ell\cdot\outBits\\
	&\ge \sensitivity - \sum_{d=\outBits+1}^{\nu} Round_{\nBits}(x_d)  + \xi - \parentheses{\sensitivity - 1/2} -\ell\cdot\outBits\\
	&\ge - \parentheses{\sum_{d=\outBits+1}^{\nu} \ell}  + \xi + 1/2 -\ell\cdot\outBits\\
	&\ge -\lnrErrSymbol/\log n -\ell\floor{\mu\cdotpa{1-1/\log n}}+1/2 \ge - \lnrErrSymbol + 1/2.
	\end{align*}
	In order to bound the error from above we use \eqref{eq:s0} and \eqref{eq5}:
	\begin{align*}
	\widehat{S_i} - S_i &= \mathfrak{r_0} + \sum_{d=1}^{\outBits} Round_{\nBits}(x_d)  + \xi - \parentheses{\sensitivity - 1/2} -\ell\cdot\outBits\\
	&\le \sensitivity - 1 + \ell\cdot \outBits  - \parentheses{\sensitivity - 1/2} -\ell\cdot\outBits \le -1/2.
	\end{align*}
	\item {\textbf{$\bm{\lastBit=0}$ case.}\\}	
	Here, since the value of \lastBit{} was is $0$, we have that $\mathfrak{r_0} + \sum_{d=1}^{\nu} Round_{\nBits}(x_d) < \sensitivity$ and thus
	\begin{align*}
	\mathfrak{r_0} + \sum_{d=1}^{\outBits} Round_{\nBits}(x_d) \le \sensitivity - \sum_{d=\outBits+1}^{\nu} Round_{\nBits}(x_d) - 1.
	\end{align*}
	We use this for the error expression of~\eqref{eq4} to get:
	\begin{align*}
	\widehat{S_i} - S_i &= \mathfrak{r_0} + \sum_{d=1}^{\outBits} Round_{\nBits}(x_d)  + \xi - \parentheses{\sensitivity - 1/2}\\
	&\le \sensitivity - \sum_{d=\outBits+1}^{\nu} Round_{\nBits}(x_d) - 1  + \xi - \parentheses{\sensitivity - 1/2} \le -1/2\\
	\end{align*}
	We now use \eqref{eq:s0}, \eqref{eq5}, and the fact that $\outBits\le \nu$ to bound the error from below as follows:
	\begin{align*}
	\widehat{S_i} - S_i &= \mathfrak{r_0} + \sum_{d=1}^{\outBits} Round_{\nBits}(x_d)  + \xi - \parentheses{\sensitivity - 1/2}\\
	&\ge \xi - \parentheses{\sensitivity - 1/2} \ge -\lnrErrSymbol + 1/2.
	\end{align*}
\end{enumerate}
Finally, we need to cover the case of $i\le o$. In this case, we return $\mathfrak r-\parentheses{\sensitivity - 1/2}$ as the estimate. This directly follows from~\eqref{eq:s0} and the fact that within a block we simply sum the rounded values (Line~\ref{line:sum}).
We established that in all cases $-\lnrErrSymbol < \widehat{S_i} - S_i < 0$, thereby proving the theorem.
\end{proof}

\end{document}